\newcommand{\CHRv}{CHR$^{\lor}$}
\newcommand{\CHRt}{OS-CHR$^{\lor}$}
\newcommand{\tuple}[1]{\left\langle #1 \right\rangle}
\newenvironment{prop}{\begin{proposition}\rm}{\end{proposition}}
\thanks[sfn]{Federal University of Pernambuco, agsj@cin.ufpe.br}
  \thanks{Work done during internship of Armando Gon\c{c}alves da Silva Junior}%
\thanks{INRIA Paris-Rocquencourt, Pierre.Deransart@inria.fr}%
\thanks{University of Pernambuco, Recife, lcsm@ecomp.poli.br}
\thanks{Universit\'e Pierre-Marie Curie, Paris, France, maurelio1234@gmail.com}
\thanks{Thal\`es, France, jacques@gmail.com}%
\begin{document}
\makeRR   

\section{Introduction}
\label{introd}

CHR (Constraint Handling Rules)\cite{frue09} is a uniquely versatile and semantically well-founded programming language. It allows programmers to specify constraint solvers in a very declarative way. An important part of the development of such solvers is in their testing and debugging phases. Current CHR implementations support those phases by offering tracing facilities with limited information.

In this report, we propose a new trace for CHR which contains enough information, including source code ones, to analyze any aspects of \CHRv\  execution at some abstract level, general enough to cover several implementations and source level analysis. Although the idea of formal specification based tracer is not new (see for example \cite{jahier00}), the main novelty lies in the generic aspect of the trace. Most of the existing implementations of CHR like in \cite{fru_brisset_highlevel_implementation_techrep95,holzfrueSics98,eclipse-prolog,wielemaker2006swi} include a tracer with specific CHR trace events, but without formal specification, nor consideration with regards to different kind of usages other than debugging.

\vspace{1mm}
The notion of generic trace has been informally introduced and used for defining portable CLP(FD) tracer and portable applications \cite{gentra4cpa,ercimlnai}. We propose here to use this approach to specify a tracer for rule based inference engine like \CHRv. A generic trace has three main characteristics: it is ``high level'' in the sense that it is independent from particular implementations of CHR, it has a specified semantics (Observational Semantics) and can be used to implement debugging tools or applications. 
An important property of the proposed generic trace is that it contains as many information on the solver behaviour as the one contained in the operational semantics. This property is called ``faithfulness'' of the observational semantics.

\vspace{1mm}
In this report, we present a generic trace for \CHRv\  based on its refined operational semantics $\omega_r^\lor$ \cite{kosd06}, 
and describe a first prototype developed for SWI-Prolog \CHRv\  engine. The implementation consists of combining the original trace of the SWI engine with source code information to get generic trace events, and then, allowing the user to filter these events using an SQL-based language.

\vspace{1mm}
This report is organized as follows. Section~\ref{sec:gentra} gives a short introduction to generic traces, observational semantics and faithfulness. Section~\ref{sec:chrd} presents \CHRv, the formal specification of its operational semantics, based on the $\omega_r^\lor$ semantics, and the requirements for the generic trace, as its syntax as well.
Section~\ref{sec:obsem} presents the observational semantics of \CHRv, OS-\CHRv, defining formally the generic trace, and shows its faithfulness. Section \ref{sec:osproto} introduces an executable operational semantics of \CHRv\ defined in SWI Prolog (the code is in the annex) and used to test its formal semantics. Section~\ref{sec:proto} describes the CHR-SWI-Prolog based prototype of the generic trace.
Section~\ref{sec:exp} presents some experimentation. Discussion and conclusions are in the two last sections.

\section{Generic Trace, Observational Semantics and Subtrace}
\label{sec:gentra}

The concept of {\em generic trace} has been first introduced in \cite{ercimlnai}, formally defined in \cite{cp11hal,TMTmanuscript11e}, and a first application to CHR presented in \cite{pierrerafRR09}. A generic trace is a trace with a specification based on a partial operational semantics applicable to a family of processes. We give here its main characteristics and the way to specify a generic trace.

\subsection{Preliminaries}
\label{sec:prelim}

A {\em trace} consists of an initial state $s_0$ followed by an ordered finite or infinite sequence of {\em trace events}, denoted $<s_0, \overline{e}>$. ${\cal T}$ is a set of traces (finite or infinite). A {\em prefix} (finite, of size $t$)  of a trace $T=<s_0, \overline{e_n}>$ (finite or infinite, here of size $n \geq t$) is a partial trace $U_t = <s_0, \overline{e_t}>$ which corresponds to the $t$ first events of $T$, with an initial state at the beginning. 
${\cal T}$ may contain any prefixes of its elements.

A trace can be decomposed into segments containing trace events only, except prefixes which start with a state. An associative operator of concatenation will be used to denote sequences concatenations (denoted $++$). It will be omitted if there is no ambiguity. The neutral element is $[]$ (empty sequence). 
A segment (or prefix) of size 0 is either an empty sequence or a state.


Traces are used to represent the evolution of systems by describing the evolution of their state. 
A state of the system is described by a given finite set of parameters and a state corresponds to a set of values of parameters. Such states will be said {\em virtual} as they correspond to states of the observed system, but they are not actually traced.
We will thus distinguish between actual and virtual traces.

\begin{itemize}
\item the {\em actual traces} (${\cal T}^w$) are a way to observe the evolution of a system by generating traces. The events of an actual trace have the form $e = (a)$ where $a$ is an {\em actual state} described by a set of {\em attributes values}. An actual states is described by a finite set of attributes. Actual traces corresponds to sequences of events produced by a tracer of an observed system. They usually encode virtual states changes in a synthetic manner. 
\item the {\em virtual traces} (${\cal T}^v$) corresponds to the sequence of the virtual states such that for each transition in the system between two virtual states, it corresponds an actual trace event. The virtual trace events have the form $e = (r, s)$ where $r$ is a {\em type of action} associated with a state transition and $s$, called {\em virtual state}, the new state reached by the transition and described by a set of {\em parameters}. Virtual traces correspond to sequences of virtual states of the observed system which produced the actual trace, together with the kind of action which produced the virtual state transition.
\end{itemize}

The correspondence between both kinds of traces is specified by two functions $E: {\cal T}^v \rightarrow {\cal T}^w$ and $I: {\cal T}^w \rightarrow {\cal T}^v$, respectively the {\em extraction} and the {\em reconstruction function}, as illustrated by the figure~\ref{fig:componentsPD3}.

\begin{figure}  
\centering
\includegraphics[width=0.6\linewidth]{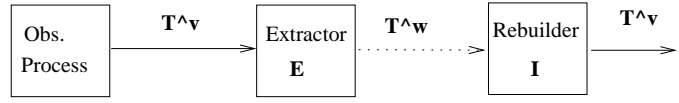}
\caption[Extraction, Reconstruction, Faithfulness Property]
{Extraction, Reconstruction, Faithfulness Property}
\label{fig:componentsPD3}  
\end{figure} 

The idea is that the actual generated trace contains as much information as possible in such way that the virtual trace can be reconstructed from the actual one. In other words, the extraction is done without loss of information. Such a property of the traces is called {\em faithfulness} and, if we denote $Id_v$ (resp. $Id_w$) the identity between virtual traces (resp. actual traces), it states that $E \circ I = Id_v$ (composition) or $E = I^{-1}$, and $I \circ E = Id_w$ (or $I = E^{-1}$).

\vspace{1mm}
Finally, each trace event is numbered by a {\em chrono}, which is an integer incremented by 1 at each new event. It will be ignored in formal presentations, but will be used as a unique identifier of the trace events in the implementations.

\subsection{Components in Trace Design}
\label{sec:quer}

When designing a trace, several components must be taken into consideration. They are depicted in the Figure~\ref{fig:componentsPD4}. 
\begin{figure}  
\centering
\includegraphics[width=0.8\linewidth]{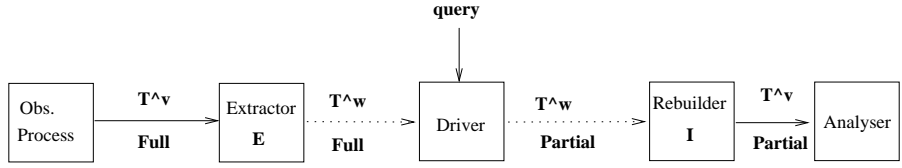}
\caption[Components in Trace Design]
{Components in Trace Design}
\label{fig:componentsPD4}  
\end{figure} 

\begin{enumerate}
\item The observed process whose behavior is modeled by a virtual trace (sequence of successive virtual states) $T^v$.

\item An {\em extractor} component which encodes the virtual trace into the actual one $T^w$. This component corresponds in practice to the tracer formalized by the extraction function $E$.

\item The {\em driver} which realizes the actual trace filtering according to some {\em trace query}. In this report we limit its role to select a subtrace of the so called {\em full trace}.

\item The {\em rebuilder} which may reconstruct from a full or partial actual trace a full or partial virtual trace. This is possible only if there is no loss of information (faithfulness property). The rebuilder is formalized by the reconstruction function $I$.

\item The {\em analyzer}, which corresponds to some debugging tool or particular application, working with the full trace or a partial one.
\end{enumerate}

Notice that in practice the three first components may be interleaved in the sense that for a given query the driver may select directly a subset of the virtual trace, thus avoiding to extract and encode a full actual trace before selecting a subtrace. 

In this report we focus on three components (observed process, extractor and rebuilder) and a property. Their description consists in a faithful observational semantics.

\subsection{Observational Semantics (OS)}
\label{sec:prel}

The evolution of a system defined by its virtual traces and the production of the corresponding actual trace can be described by a so called {\em Observational Semantics} as follows. More general definitions can be found in \cite{TMTmanuscript11e}.

\begin{definition} [Observational Semantics]
\label{def:OS}

An observational semantics consists of $< S, R, A, T, E, I, S_0 >$, where

\vspace{1mm}
\begin{itemize}
\item $S$: {\em domain of virtual states}, a subset of the Cartesian product of domains of parameters.
\item $R$: finite set of {\em action types}, set of identifiers labeling the transitions.
\item $A$: {\em domain of actual states}, a subset of the Cartesian product of domains of attributes.
\item $Tr$: {\em state transition function\footnote{It can be a relation in case of a non deterministic transitions.}} $Tr: R \times S \rightarrow S$, characterized as $Tr(r_i, s_{i-1}) = s_i$, where $(s_{i-1},s_i)$ is the $i_{th}$ transition starting from an initial state $s_0$, and labeled by $r_i$.
\item $E_l$: {\em local trace extraction function} $E_l: S \times R \times S \times {\cal T}^w  \rightarrow A$, which is defined for all $r,s,s'$ such that $Tr(r,s) = s'$ and characterized as \newline$E_l(s_{i-1}, r_i, s_i, T^w_{i-1}) = a_i$, where $a_i$ is the $i^{th}$ actual trace event generated after the initial state $s_0$. In this definition the extraction may use all information accumulated in the already generated actual trace $T^w_{i-1}$.
\item $I_l$: {\em local trace reconstruction function} $I_l: S \times {\cal T}^w  \rightarrow R \times S$ is characterized as $I_l(s_{i-1}, T^w_i) = (r_i, s_i)$. The extraction function is ``local'' if it uses a terminal bounded subsequence of $T^w_i$. Here it will use the last actual trace event only $a_i$\footnote{In general a bounded number of element should be used, but also some forward elements like $a_{i+1}$... For more details see \cite{TMTmanuscript11e}.} ($T^w_i = T^w_{i-1} a_i$).
\item $S_0 \subseteq S$,  set of {\em initial states}.
\end{itemize}
\end{definition}

The local extraction and reconstruction functions can be extended to obtain the functions $E$ (resp. $I$) between sets of virtual and actual traces, as follows: 

\noindent
$E(T^v_n) = s_0 E_l(s_0, r_1, s_1, T^w_0) ... E_l(s_{i-1}, r_i, s_i, T^w_{i-1}) ... E_l(s_{n-1}, r_n, s_n, T^w_{n-1}) = T^w_n$, as $E_l(s_{i-1}, r_i, s_i, T^w_{i-1}) = a_i$ and $T^w_i = s_0 a_1 ... a_i$. 
And 

$I(T^w_n) = s_0 I_l(s_0, T^w_1)... I_l(s_{i-1}, T^w_i)... I_l(s_{n-1}, T^w_n)$ with $I_l(s_{i-1}, T^w_i) = (r_i, s_i)$.


\vspace{1mm}
The Observational Semantics is {\em faithful} if $E$ and $I$ satisfy the faithfulness property, i.e. if $\forall T^v, T^w finite, E(T^v) = T^w  \wedge I(T^w) = T^v$.

\begin{exemple}
\label{ex:miautom}
\ \ 

\vspace{1mm}
The figure~\ref{fig:exmi1} illustrates the following simple automaton (arrows are labeled by type of actions).
\begin{itemize}
\item $S = \{s_0, s_1, s_2\}$, 
\item $R = \{a, b\}$, 
\item $A = \{a, b\}$, 
\item $\forall s, T(b, s) = s_1, T(a, s_1) = s_2$,
\item $E_l(s, r, s') = r$ (the generated actual trace is not used),
\item $\forall s, I_l(s, b) = (b, s_1), I_l(s_1, a) = (a, s_2)$ (the last actual trace event only is used),
\item $S_0 = \{s_0\} $. 
\end{itemize}

The virtual traces are: $s_0 (b,s_1)^+ \{(a,s_2)(b,s_1)^+\}^*$. 
The actual traces are: $s_0 b^+ (ab^+)^*$. 

\begin{figure}[h]
\begin{center}
\includegraphics[width=0.3\linewidth]{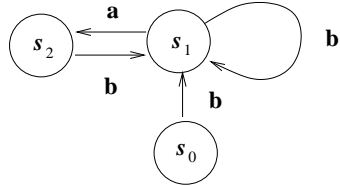}
\end{center}
\caption{Exemple~\ref{ex:miautom}: Finite State Automaton}
\label{fig:exmi1}
\end{figure}

It is straightforward to see that this OS is faithful, using the regular expression form of the traces.
\end{exemple}


\subsection{Faithful Trace Specification}
\label{sec:stetrec}

We first give a condition for an OS to be faithful.

\begin{prop}
\label{prop:faith}
\ \ 

\vspace{1mm}
Given an OS $< S, R, A, Tr, E_l, I_l, S_0 >$, $E$ and $I$ the extensions of $E_l$ and $I_l$ as above, if the following condition holds:

$\forall r, s, s', a, T^w, Tr(r, s) = s' \Rightarrow (E_l(s, r, s', T^w) = a  \wedge I_l(s, T^w a) = (r, s'))$, where $T^w$ is a finite actual trace generated by successive applications of the transition function $Tr$ from an initial state $s_0$ until an ultimate state $s$,

then the OS is faithful, i.e. $\forall T^v, T^w finite,  E(T^v) = T^w  \wedge I(T^w) = T^v$, where $T^v$ is a virtual trace corresponding to successive applications of the transition function $Tr$ from an initial state $s_0$ until an ultimate state $s$.
\end{prop}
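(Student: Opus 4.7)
The plan is to prove the proposition by induction on the length $n$ of the virtual trace. The key observation is that the hypothesis is precisely the ``one-step'' version of the faithfulness property, so the work reduces to showing that it lifts cleanly from single transitions to arbitrary finite sequences.

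First I would set up notation. Given a virtual trace $T^v_n = s_0 (r_1,s_1) \cdots (r_n,s_n)$ obtained by successive applications of $Tr$ from $s_0$, I define $a_i = E_l(s_{i-1}, r_i, s_i, T^w_{i-1})$ and $T^w_i = T^w_{i-1} a_i$, with $T^w_0 = s_0$. By the extension formula for $E$ given in Section~\ref{sec:prel}, this yields $E(T^v_n) = T^w_n$ directly from the definition, so the extraction half is immediate once we identify the $a_i$.

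The inductive step then handles reconstruction. For the base case $n=0$, both $E$ and $I$ leave the initial state $s_0$ unchanged. For the inductive step, assume $I(T^w_{n-1}) = T^v_{n-1}$. Since $Tr(r_n, s_{n-1}) = s_n$ and $E_l(s_{n-1}, r_n, s_n, T^w_{n-1}) = a_n$, the hypothesis of the proposition applies (the requirement that $T^w_{n-1}$ be generated by successive applications of $Tr$ from $s_0$ is met by construction), yielding
\[
I_l(s_{n-1}, T^w_{n-1} a_n) = I_l(s_{n-1}, T^w_n) = (r_n, s_n).
\]
Plugging this into the extension formula for $I$, we obtain $I(T^w_n) = I(T^w_{n-1})\,I_l(s_{n-1}, T^w_n) = T^v_{n-1}(r_n, s_n) = T^v_n$, which closes the induction.

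There is no real obstacle here; the proof is essentially a careful bookkeeping exercise. The only point worth stating explicitly is that the inductive hypothesis must carry along the invariant that $T^w_{i-1}$ is indeed the actual trace produced by the first $i-1$ transitions starting from $s_0$, since the hypothesis of the proposition is restricted to such ``generated'' traces; this invariant is maintained automatically because each $a_i$ is defined exactly as the extraction of the $i$-th transition. The symmetric direction starting from an actual trace $T^w_n$ and reconstructing $T^v_n$ would follow by the same induction, using $I_l$ to produce $(r_i, s_i)$ at each step and the hypothesis to guarantee $E_l(s_{i-1}, r_i, s_i, T^w_{i-1}) = a_i$.
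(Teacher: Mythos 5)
Your proof is correct and follows essentially the same route as the paper: an induction on the length of the trace, using the one-step hypothesis to extend both the extraction and the reconstruction by one event, with the same observation that the invariant ``$T^w_{i-1}$ is the actual trace generated so far'' is maintained through the induction. The only cosmetic difference is that you treat the extraction half as immediate from the definition of $E$ and run the induction only for $I$, whereas the paper carries both conjuncts through the same induction; the content is identical.
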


\begin{proof}
\label{proof:faith}
\ \ 
The proof is by induction on the size of the traces. By definition with each new transition actual and virtual traces are increased by one event only.
Initially: (traces of size 0) $E(s_0) = s_0  \wedge I(s_0) = s_0$.

\vspace{1mm}
Let us assume the property holds for a trace of size $n$: 

 (1) $E(T^v_n) = T^w_n \wedge I(T^w_n) = T^v_n$ (2).

(1) means, according to the definition of $E$, that $T^w_n$ is the actual trace generated by $n$ successive applications of $Tr$ function from a state $s_0$ until some ultimate state $s_n$ by using the local extraction function $E_l$ at each step.

(2) means, according to the definition of $I$, that $T^v_n$ is the virtual trace reconstructed from the actual trace $T^w_n$. The holding conjunction means that the reconstructed virtual trace is the same as the one used to produce the actual one, therefore $I(T^w_n)$ ends in a state $s_n$, the same as for $T^v_n$.

One shows that (${++}$ concatenation and list constructor are omitted for singletons)

(1') $E(T^v_{n+1}) = T^w_{n+1} \wedge I(T^w_{n+1}) = T^v_{n+1}$ (2')  holds. 

Assume that the ultimate state reached by $T^v_n$ is $s_n$. Then by definition (if $s_n$ is not a final state), $\exists r_{n+1} s_{n+1}, Tr(r_{n+1}, s_n, s_{n+1})$ holds, hence:
$T^v_{n+1} = T^v_n (r_{n+1}, s_{n+1})$, and, by definition $T^w_{n+1} = T^w_n a_{n+1}$.

\vspace{1mm}
(1') holds: 

\noindent
$E(T^v_{n+1}) = E(T^v_n (r_{n+1}, s_{n+1})) = T^w_n E_l(s_n, r_{n+1}, s_{n+1}, T^w_n) = T^w_n a_{n+1} = T^w_{n+1}$, by hypothesis (1) and where the last argument of $E_l$, $T^w_n$, is the actual trace resulting from $E(T^v_n)$.

(2') holds: 

$I(T^w_{n+1}) = I(T^w_n a_{n+1}) =  T^v_n I_l(s_n,T^w_n) = T^v_n (r_{n+1}, s_{n+1}) = T^v_{n+1}$, by hypothesis (2) and the definition of $I_l$.

\end{proof}

The proposition~\ref{prop:faith} shows how to design a faithful observational semantics: by specifying together the local extraction and reconstruction functions, verifying at each step defined by the transition function that the generated trace allows to reconstruct the same reached state as the one specified by the transition function.

This results can be extended to actual subtraces (defined as traces obtained by considering a subset of attributes of the actual traces), provided that there are sufficiently many attributes in the subtrace to recompute parameters (then the corresponding virtual subtrace is a virtual trace with a subset of parameters)\footnote{This implies some restrictions on the kind of queries (in case of dependencies between parameters or attributes) which are not detailed here. For more details see \cite{TMTmanuscript11e}.}.

This is illustrated in the figure~\ref{fig:componentsPD4}. A query applied to the actual trace selects a partial actual trace in such a way that the resulting partial trace can be reconstructed as a partial virtual trace (the one from which the partial actual trace could be extracted).
\vspace{1mm}
In practice, the generic trace specification consists of an operational semantics corresponding to some abstract level of process observation, instrumented to produce an actual trace. The level of description (granularity of the events) should be chosen in such a way that this abstract operational semantics can be abstracted from each particular semantics of each process of the family. Symmetrically, it is requested that the abstract operational semantics can be ``implemented'' in each process of the family. 

The faithfulness property of the observational semantics guarantees that the generic actual trace preserves the whole information concerning the process behavior, in such a way that all desirable properties of the semantics can be studied just by looking at the trace.

\section{Generic Trace for \CHRv}
\label{sec:chrd}

In this section we introduce the generic trace proposed for \CHRv.
It is based on the refined Theoretical Operational Semantics for CHR, $\omega_r^\lor$, as defined in \cite{kosd06}.

Such semantic is declarative enough to cover most of the CHR implementations. It is the case for ECLiPSe Prolog \cite{eclipse-prolog} and SWI-Prolog \cite{wielemaker2006swi} whose operational semantics can be viewed as a refinement of $\omega_r^\lor$ (conversely $\omega_r^\lor$ can be viewed as an abstraction of the semantics of these implementations).

\subsection{Introducing \CHRv}
\label{ssec:illustr}

Constraint Handling Rules emerges in the context of Constraint
Logic Programming (CLP) as a language for describing Constraint Solvers. In
CLP, a problem is stated as a set of constraints, a set of predicates and a
set of logical rules. Problems in CLP are generally solved by the interaction
of a logical inference engine and constraint solving components. The logical
rules (written in a host language) are interpreted by the logical inference
engine and the constraint solving tasks are delegated to the constraint
solvers. We borrow from \cite{pierrerafRR09} some elements of presentation.

The following rule base handles the less-than-or-equal problem: 
\lstset{
	backgroundcolor==\color{darkgray},
	tabsize=4,
	rulecolor=,
	language=Prolog,
        basicstyle=\footnotesize,  
        upquote=true,
        aboveskip={1.5\baselineskip},
        columns=fixed,
        showstringspaces=false,
        extendedchars=true,
        breaklines=true,
        prebreak = \raisebox{0ex}[0ex][0ex]{\ensuremath{\hookleftarrow}},
        showtabs=false,
	numberstyle=\tiny,
	numbersep=5pt,
    showspaces=false,	
    showstringspaces=false,
    identifierstyle=\ttfamily,
    keywordstyle=\color{black}\bfseries\emph,
    commentstyle=\color[rgb]{0.133,0.545,0.133},
    stringstyle=\color[rgb]{0.627,0.126,0.941},
}

\begin{lstlisting}
antisymmetry @ leq(X,Y), leq(Y,X) <=> X = Y. 
reflexivity @ leq(X,X) <=> true. 
idempotence @ leq(X,Y) \ leq(X,Y) <=> true. 
transitivity @ leq(X,Y), leq(Y,Z)  ==> leq(X,Z).
\end{lstlisting}

\vspace{1mm}
This CHR program specifies how $leq$ simplifies and propagates as a constraint.
The rules implement reflexivity, antisymmetry, idempotence and transitivity in a
straightforward way. The $reflexivity$ rule states that $leq(X,Y)$ simplifies to
$true$, provided it is the case that $X=Y$. This test forms the (optional)
guard of a rule, a precondition on the applicability of the rule. Hence,
whenever we see a constraint of the form $leq(X,X)$ we can simplify it to true.

\vspace{2mm}
The $antisymmetry$ rule means that if we find $leq(X,Y)$ as well as $leq(Y,X)$
in the constraint store, we can replace it by the logically equivalent $X=Y$.
Note the different use of $X=Y$ in the two rules: in the $reflexivity$ rule the
equality is a precondition (test) on the rule, while in the $antisymmetry$ rule
it is enforced when the rule fires. (The reflexivity rule could also have been
written as $reflexivity @ leq(X,X) <=> true.$) 

\vspace{2mm}
The rules $reflexivity$ and $antisymmetry$ are \textit{simplification rules}. In
such rules, the constraint found are removed when the rule applies and fires.
The rule $idempotence$ is a \textit{simpagation rule}, only the constraint in the right part
of the head will be removed. The rule says that if we find $leq(X,Y)$ and another
$leq(X,Y)$ in the constraint store, we can remove one.

\vspace{2mm}
Finally, the $transitivity$ rule states that the conjunction $leq(X,Y), leq(Y,Z)$
implies $leq(X,Z)$. Operationally, we add $leq(X,Z)$ as (redundant) constraint.
without removing the constraints $leq(X,Y), leq(Y,Z)$. This kind of rule is
called \textit{propagation rule}.

\vspace{3mm}
The CHR rules are interpreted by a CHR inference engine by rewriting the initial
set of constraints by the iterative application of the rules. Its extension with disjunctive
bodies, CHR$^\vee$ boosts its expressiveness power, turning it into a general
programming language (with no need of an host language).

\vspace{2mm}
Summarizing, there are three kinds of rules in CHR as in \CHRv: simplification, propagation and simpagation. 

\vspace{1mm}
The simpagation rules are the most general category of rules and they have the following form $r @ H_k \backslash H_r \Leftrightarrow G | B.$, where \verb|r| is an identifier for the rule, $H_r$ and $H_k$ are the heads of the rule (in the following they will be denoted respectively as {\em keep} and {\em remove}), $G$ is the guard, and $B$ is the body. If the guard is \verb|true|, it can be omitted. 

\vspace{2mm}
The operational semantics of such rule is that if $H_k$ and $H_r$ are found in the constraint store and the guard $G$ is entailed by it, the constraints in $H_r$ should be removed and the constraints in the body $B$ should be added to the constraint store. If $H_k$ is empty, this rule is called a \emph{Simplification Rule}, and this part of the rule is omitted. On the other side, if $H_r$ is empty, this rule is called a \emph{Propagation Rule}. In this case, the second part of the head of the rule is omitted and the $\Leftrightarrow$ is replaced by the symbol $\Rightarrow$.

\vspace{1mm}
In \CHRv, we distinguish two kinds of constraints: Rule defined constraints (RDC), which are declared in the current program and defined by CHR rules, and built-in constraints (BIC), which are predefined in the host language or imported CHR constraints from some other module. Furthermore, in \CHRv, bodies may contain disjunctions.

\begin{exemple}
\label{ex-append}

The following \CHRv rule defines the \verb|append(X,Y,Z)| constraint:

\begin{verbatim}
r1 @ append(X,Y,Z) <=> (X = [], Z = Y);
                       (X = [H|L1], Z = [H|L2], append(L1,Y,L2)).
\end{verbatim}

In this rule, \verb|Z| is a list composed by the elements of the list \verb|X| followed by the elements of the list \verb|Y|. If \verb|append(X,Y,Z)| holds, we have two options: (i) \verb|X=[]| and, therefore, \verb|Z=Y|; or (ii) \verb|X| is a list in the form \verb+[H|L1]+, and thus, \verb|Z| is composed by \verb|H| followed by \verb|L1| and then followed by the elements in \verb|Y|.

We illustrate the \CHRv solver behavior, by showing the effect of successive applications of the \CHRv rules on the constraint store.
 Let us suppose the initial state of the constraint store is $append([1], [2], Z)$. The actual execution of this
rule base is represented by the following set of transitions. The notation $\tuple{G_0};\ldots;\tuple{G_n}$ represents the alternative stores. The transition $\mapsto_{r_1}$ represents the application of $r_1$ and the transition $\mapsto_{*}$ the removal of failed stores. Notice that from the final state we can conclude $Z = [1,2]$:

{\small
  \begin{eqnarray*}
	\tuple{append([1], [2], Z)} & \mapsto_{r_1} & \\
	\tuple{[1] = [], Z = [2]} ; \tuple{[1] = [1|[]], Z = [1|L2], append([], [2], L2)} & \mapsto_{*} &\\
	\tuple{[1] = [1|[]], Z = [1|L2], append([], [2], L2)} & \mapsto_{r_1} &\\
	\tuple{[1] = [1|[]], Z = [1|L2], [] = [], L2 = [2]} ;\\ \tuple{[] = [H|L1], L2 = [H|L2'], append(L1, [2], L2') )} & \mapsto_{*} &\\
	\tuple{[1] = [1|[]], Z = [1|L2], [] = [], L2 = [2]} &  &
  \end{eqnarray*}
}

\end{exemple}

\subsection{Operational Semantics $\omega_r^\lor$}
\label{ssec:opsem}

The $\omega_r^\lor$ semantics given here is adapted from \cite{frue09}.

\vspace{1mm}
We define $CT$ as the constraint theory which defines the semantic of the built-in constraints
and thus models the internal solver which is in charge of handling them. We assume it supports
at least the equality built-in. We use $[H|T]$ to indicate the first ($H$) and the remaining ($T$)
terms in a list or stack, $+$ for pushing elements into stack (there may be several pushed elements represented as a list), $++$ for sequence concatenation and $[]$ for empty sequences. 
We use the notation $\{a_0,\ldots,a_n\}$ for both bags and sets. Bags are sets which allow repeats.
We use $\cup$ for set union and $\uplus$ for bag union, $\{\}$ to represent both the empty
bag and the empty set, and $E-E'$ to remove all occurrences of elements of $E'$ from $E$.
The identified constraints have the form $c\#i$, where $c$ is a user-defined constraint and $i$ a natural number. They differentiate among copies of the same constraint in a bag. We also use the functions $chr(c\#i)=c$ and $id(c\#i)=i$ with their natural extension to lists of constraints and of identifiers.

\vspace{1mm}
A CHR program is a sequence of rules, and the head and body of a rule are considered sequences of atomic constraints.
A number is associated with every atomic head constraint, called the \textit{occurrence}. Head constraints are numbered per functor, starting from 1,in top-down order from the first rule to the last rule, and from left to right.
However, removed head constraints in a simpagation rule are numbered before kept head constraints. This numbers will be used to show in witch position the active constraint is (the $j$ indicator).

\vspace{1mm}
An execution state ${\cal E}$ is a tuple $\tuple{A,S,B,T}_n$, where 
\begin{itemize}
\item $A$ is the execution stack; 
\item $S$ is the UDCS (User Defined Constraint Store), a bag of identified user defined constraints; 
\item $B$ is the BICS (Built-in Constraint Store), a conjunction of constraints; 
\item $T$ is the Propagation History, a set of sequences for each recording the identities of the user-defined constraints which fired a rule; 
\item $n$ is the next free natural used to number an identified constraint. 
\end{itemize}

Current alternatives are denoted as ordered sequence of execution states, ${\cal L}=[{\cal E}_1,{\cal E}_2,...{\cal E}_n]$  where ${\cal E}_1$ is the active execution state and $[{\cal E}_2,...,{\cal E}_n]$ the remaining alternatives. 

The initial configuration is represented by ${\cal L}_0 = [\tuple{A, \{\}, true, \{\}}_{1}]$. The top of execution stack $A$  is a ``goal'' constraint (the one corresponding to the initial goal of the program) that will be processed.
The transitions are applied non-deterministically until no transition is applicable anymore. 

\vspace{1mm}
The formal description of the transitions is done in the form of rules: for each type of action $r \in R$ there is a rule of the form {\bf r} $s \mapsto s'$, such that $Tr(r,s) = s'$. 

\begin{center}
{\small
\fbox{
\begin{minipage}[c]{12cm}

\begin{description}
 \item [Solve+Wake] $[\tuple{[c|A], S, B, T}_n | {\cal L}] \mapsto [\tuple{A'+A,S, c \land B, T}_n | {\cal L}]$, where $c$ is built-in and $A' = wakeup(S,c,B)$ where $wakeup$ is a function that implements the \textit{wake-up policy}\cite{frue09} whose result is a list of constraints of $S$ woken by adding $c$ to $B$.

 \item [Activate] $[\tuple{[c|A], S, B, T}_n | {\cal L}] \mapsto [\tuple{[c\#n:1|A], c \uplus S,  B, T}_{n+1} | {\cal L}]$, where $c$ is user-defined constraint.

 \item [Reactivate] $[\tuple{[c\#i|A], S, B, T}_n | {\cal L}] \mapsto [\tuple{[c\#n:1|A], S,  B, T}_{n}| {\cal L}]$, where $c$ is user-defined constraint.

 \item [Apply.1] $[\tuple{[c\#i:j|A], H_1 \uplus H_2 \uplus S, B, T}_n | {\cal L}] \mapsto [\tuple{[c\#i:j|A], H_1 \uplus H_2 \uplus S, B, T}_n | {\cal L}]$ where the $j^{th}$ occurrence of a constraint with same functor as $c$  exists in the head of a fresh variant of a rule $r @ H'_1 \backslash H'_2 \Leftrightarrow g | C$ and a matching substitution modulo $B,e$\footnote{The matching substitution will record all Equalities in the Built-in Store\label{fn:repeat}}, such that $chr(H_1) = e(H_1')$, $chr(H_2) = e(H_2')$ and  $\{(r,id(H_1){++}id(H_2))\} \notin T$. 

 \item [Apply.2] $[\tuple{[c\#i:j|A], H_1 \uplus H_2 \uplus S, B, T}_n | {\cal L}] \mapsto [\tuple{C+H+A, H_1 \uplus S, e \land B, T'}_n | {\cal L}]$  where the $j^{th}$ occurrence of a constraint with same functor as $c$  exists in the head of a fresh variant of a rule $r @ H'_1 \backslash H'_2 \Leftrightarrow g | C$ and a matching substitution modulo$B,e$\footref{fn:repeat}, such that $chr(H_1) = e(H_1')$, $chr(H_2) = e(H_2')$ and  $\{(r,id(H_1){++}id(H_2))\} \notin T$ and $CT \models \exists (B) \land \forall B \supset \exists (e \land g)$ and $T' = T \cup \{(r,id(H_1){++}id(H_2))\}$, and $H = c\#i:j$ if $c$ is in $H_1$, $H = []$ if $c$ is in $H_2$,  and c $\in H_1 \underline{\vee} H_2$.

 \item [Drop] $[\tuple{[c\#i:j|A], S, B, T}_n | {\cal L}] \mapsto [\tuple{A, S, B, T}_{n}| {\cal L}]$, where there is no occurrence $j$ for $c$ in the program.

 \item [Default] $[\tuple{[c\#i:j|A], S, B, T}_n | {\cal L}] \mapsto [\tuple{[c\#i:j+1|A], S, B, T}_{n} | {\cal L}]$, if no other transition is possible in the current state.

 \item [Split] $[\tuple{[c_1\lor ... \lor  c_m|A], S, B, T}_n |{\cal L}] \mapsto [\sigma_1,..., \sigma_m|{\cal L}]$, where $\sigma_i = \tuple{[c_i|A], S, B, T}_{n} $, for $1 \leq i \leq m$. This transition implements depth-first, other search strategies can be implemented by easily changing this definition.

 \item [Fail] $[{\cal E}|{\cal L}] \mapsto {\cal L}$, This transition is called automatically if ${\cal E}$ is a failed state. By definition a failed state occurs when the Built-in store is false.

\end{description}
\end{minipage}
}}
\end{center}

This semantics is adapted from \cite{frue09}, it differs mainly by two additional type of actions: {\bf Apply.1} ({\bf Apply.2} corresponds to the original {\bf Apply} rule) and {\bf Fail}. {\bf Apply.1} corresponds to an attempt of using a CHR rule and will be applied only once for each $j, H1,H2$ occurrence. If the transition {\bf Apply.2} is applied, there was an application of {\bf Apply.1} for the same occurrence $j$, but the converse does not hold. The {\bf Fail} case corresponds to a failed computation. In this case, other alternatives are explored.

\subsection{Generic Trace}
\label{gt-omega-t}

We introduce here the generic actual trace of \CHRv\ informally. Each transition in the $\omega_r^\lor$ semantics should generate an actual trace event. Here are the expected types of actions together with the other possible attributes. There are two categories of attributes: those coding the virtual states which may serve for the reconstruction, some others may bring useful information for potential applications.

Each transition corresponding to the type of action $r$ generates an actual trace event whose first attribute characterizes $r$ uniquely. A subset of the attributes only is attached to a specific type of event. Attributes are as follows:

\begin{itemize}
\item {\bf Port}, or {\bf Event Name: p}. It belongs to 

\noindent
\{$Wake, ActivateRDC, ReactivateRDC, TryRule, ApplyRule, Drop,$ \\ $Default, Split, Fail$\}.

 There is an obvious bijection between $R$ (the set of type of actions) of the OS and this set of event names. It belongs to all events and is always the first attribute.
\item {\bf Constraint instance of some type w} where the constraint is a term $c$ before the first activation, or $c\#i:j$ in the other cases, where $i$ and $j$ are integers. A constraint will be represented by a list of form $[p, t_1, t_2, ..., t_m]$ or $[p, t_1, t_2, ..., t_m, i, j]$. $w$ is $cons$ in the former case and $cinst$ in the later.
\item {\bf List of constraints instances of some type w: w-lc}.  It may be a list of terms or a list of the form $[w,[[c_1], ...,[c_m]]]$, each element represented by a list as above. {\bf w} is the kind of constraints. It may be $woken$, $addrdc$, $addbic$, $keep$, $remove$ or $guard$ constraints, or a matching substitution $match$, represented by a list of equations.
\item {\bf Reference to a previous trace event: ref}, where {\bf ref} has the form @$i$ and $i$ is an integer identifying a previous actual trace event (a previous chrono). 
\item {\bf Rule name: r@}, where {\bf r@} is the name of a rule in the source program.
\item {\bf State number: n}. An integer corresponding to the parameter numbering a solver virtual state. It belongs to all events and is always the last attribute.
\end{itemize}

This is formalized in the following tables. The first table is a context free syntax of the generic trace. Terminals are between brackets. Non terminals corresponding to attributes are names of attributes. Optional items are between square brackets (those which are not in terminals), options are separated by vertical bars. Brackets after $\in$ denote a set (to avoid multiplicity of rules).

\begin{table} [h] 
\begin{tabular}[t]{|l|l|}
\hline Gen. traceT & ::= Ev$\_1$...Ev$\_m$, $m \geq 0$\\
\hline Ev & ::=  \{{\tt GT: [}\} Chrono \{{\tt ,}\} Port [Other] Spec$\_$Attr \{{\tt ,}\}  \\
\hline  &   StateNumber \{{\tt ]}\}  \\
\hline Chrono  & ::= \{integer\} \\
\hline Port & $\in$ \{ $Wake, ActivateRDC, RectivateRDC, TryRule,$  \\
\hline      &     $ApplyRule, Drop, Default, Split, Fail$ \}\\
\hline Other  & ::= Ref | RuleName | Indice | CT | CTIJ \\
\hline Ref  & ::= \{, @ integer\}  \\ 
\hline RuleName  & ::= \{, identifier @\}  \\ 
\hline Indice  & ::= \{, integer\} \\
\hline Spec$\_$Attr & ::= $\epsilon$ | \{,\} SpecAttr  Spec$\_$Attr  \\
\hline SpecAttr & ::= \{[\} Attr ListCT \{]\}  \\
\hline Attr & $\in$ \{ {\tt Woken}, {\tt Addrdc}, {\tt Addbic}, {\tt Keep}, {\tt Remove},  \\
\hline   &  {\tt Guard}, {\tt Match} \}  \\
\hline ListCT & ::= $\epsilon$ | \{,\} (CT | CTIJ)  ListCT)  \\
\hline CT  & ::= \{[ predicate, t\_1, ..., t\_n ]\}  \\ 
\hline CTIJ & ::= \{[ predicate, t\_1, ..., t\_n, i, j ]\}  \\
\hline StateNumber  & ::= \{integer\} \\
\hline
\end{tabular}
\caption[Generic Trace Grammar]{Syntax of the Generic Trace Grammar}
\label{table:GenTrGrammar}  
\end{table}

The following table gives the list of the attributes other than {\tt Chrono}, {\tt Port} and {\tt StateNumber} for each type of event.
The firts item is the transition name ($\in R$), the second item the corresponding port (value of the attribute {\tt Port}) and the last item is the list of attributes.

\begin{table} [h] 
\begin{tabular}[t]{|l|l|l|}
\hline $r \in R$ & Port &  Specific Attributes \\
\hline 
\hline Solve+Wake & $Wake$ &  {\tt Cons}, {\tt Woken} \\
\hline Activate & $ActivateRDC$ & {\tt Cinst}  \\
\hline Reactivate & $ReactivateRDC$ & {\tt Cinst}, {\tt Ref}  \\
\hline Apply.1 & $TryRule$ &  {\tt RuleName}, {\tt Cinst}, {\tt Keep}, {\tt Remove}, {\tt Guard} \\
\hline Apply.2 & $ApplyRule$ & {\tt Ref}, {\tt Addrdc}, {\tt Addbic}, {\tt Remove}, {\tt Match}, {\tt Cinst}  \\
\hline Drop & $Drop$ &  {\tt Cinst} \\
\hline Default & $Default$ & {\tt Index}\\
\hline Split & $Split$ & {\tt Ref} \\
\hline Fail & $Fail$ &  {\tt Ref}\\
\hline
\hline
\end{tabular}
\caption[Generic Trace Grammar]{Specific Attributes for each type of event (except chrono and state number)}
\label{table:Attributes}  
\end{table}

The following lists all attributes (specific or other) for each actual trace event corresponding to some type of action. All examples\footnote{The Prototype does not generate the $j$ index of the constraint, therefore there isn't any example with this info.} are extracted from a generic trace of the section~\ref{ssec:oschrproto}. 

\begin{itemize}
\item {\bf Solve+Wake} (port $Wake$):  {\em a built-in constraint (BIC) is solicited and some constraints are woken}\\
2 attributes (total 5):
\begin{itemize}
\item {\tt Cons}: the built-in constraints being ``executed''.
\item {\tt Woken}: the (possibly empty) list of the constraints woken by the \textit{wake-up policy}.
\end{itemize}

Example: {\tt GT: [61,Wake,[=,C1,a0],[woken,[[node,r1,C1,359]]],360]} 

\item {\bf Activate} (port $ActivateRDC$):  {\em activate a Rule Defined Constraint (RDC) getting the RDC $c\#i:j$ from the top of the execution stack and activating it.}\\
1 attribute (total 4):
\begin{itemize}
\item {\tt Cinst}: the user defined constraint which is ``introduced'' and ``executed''. The attribute value is the created instance of this constraint.
\end{itemize}

Example: {\tt GT: [3,ActivateRDC,[edge,r1,r9,332],333]}

\item {\bf Reactivate} (port $ReactivateRDC$): {\em Activate a Rule defined constraint with justification}\\
2 attributes (total 5):
\begin{itemize}
\item {\tt Cinst}: the user-defined constraints instance being ``re-executed'', which became active.
\item {\tt Ref}: a reference to the {\bf Solve+Wake} event where this constraint has been woken (form of justification).
\end{itemize}

Example: {\tt GT:[62,Reactivate,[node,r1,a0,359],@61,360]}

\item {\bf Apply.1} (port $TryRule$): {\em attempt to apply a Rule; it may be followed by an {\bf Apply.2} event in case of successful application, or another event in case it cannot by applied.}\\
5 attributes (total 8):
\begin{itemize}
\item {\tt RuleName}: the name of the tried rule in the source code.
\item {\tt Cinst}: the active user-defined constraint instance (the one at the top of the execution stack).
\item {\tt Keep}: the {\em keep} constraints of the store used to match the head of the tried rule.
\item {\tt Remove}: the {\em remove} constraints of the store used to match the head of the tried rule.
\item {\tt Guard}: the {\em guard} constraints of the tried rule (as in the source program). This information may be useful in case of failure.
\end{itemize}

Example: {\tt GT: [102,$TryRule$,wrong@,[node,r5,a0,375],
[keep,[[node,r4,a0,371], [edge,r4,r5,340], [node,r5,a0,375]]],
[remove,[]],[guard,[[=,a0,a0]]],376]}

\item {\bf Apply.2} (port $ApplyRule$):  {\em applying the rule with success (true guard)}\\
7 attributes (total 10):
\begin{itemize}
\item {\tt Ref}: a refence to the previous {\bf Apply.1} event where the name of the applied rule, the active constraint and some other information can be found.
\item {\tt Addrdc}: the user-defined constraints instances of the body of the applied rule, pushed on the stack.
\item {\tt Addbic}: the built-in constraints instances of the body of the applied rule, pushed on the stack.
\item {\tt Keep}: the {\em keep} constraints of the store used to match the head of the tried rule.
\item {\tt Remove}: the {\em remove} constraints of the store used to match the head of the tried rule.
\item {\tt Match}: the successful matching equations (a way to give the current substitution).
\item {\tt Cinst}: the active user-defined constraint instance (the one at the top of the execution stack).
\end{itemize}

Example: {\tt GT: [103,$ApplyRule$,@102,[addrdc],[addbic,[fail,fail]],
[keep,[[node,r4,a0,371], [edge,r4,r5,340], [node,r5,a0,375]]],
[remove,[]],[match,[edge(Ri,Rj)=node(,r4,a0)],
[node(Ri,Ci)=edge(,r4,r5)], [node(Rj,Cj)=node(,r5,a0)]],
[node,r5,a0,375],376]}

\item {\bf Drop} (port $Drop$):  {\em drop a constraint. The currently active constraint $c\#i:j$ is removed from the stack. There is no more occurrences $j$ for $c$ in the program.}\\
1 attribute (total 4):
\begin{itemize}
\item {\tt Cinst}: the constraint which is popped from the execution stack.
\end{itemize}

Example: {\tt GT: [4,$Drop$,[edge,r1,r9,332],333]}

\item {\bf Default} (port $Default$): {\em  The occurrence index $j$ of the active constraint $c\#i:j$ is incremented (proceed to the next occurrence of the constraint instances in the program). There is no more occurrences $j$ for $c$ in the program.}\\
2 attributes (total 5):
\begin{itemize}
\item {\tt Cinst}: the last used constraint occurrence.
\item {\tt Index}: The occurrence Index $j$ is incremented (proceed to the next occurrence of the constraint instances in the program.
\end{itemize}

Example: {\tt GT: [28, $Default$, [node, r7, r, 5, 386], 6, 388]}\footnote{It's a hand-made example. Although, It's not difficult to compute how many Default transitions was applied before the Drop or Apply.2. For the prototype this transition is negligible.}

\item {\bf Split} (port $Split$): {\em create a disjunction. Occurs when a rule is disjunctive.} \\
1 attribute (total 4):
\begin{itemize}
\item {\tt Ref}: a reference to the most recent {\bf Apply.2} event with the rule whose body contains the disjunction.
\end{itemize}

Example: {\tt GT: [60,$Split$,@59,360]}

\item {\bf Fail} (port $Fail$): {\em the referred rule application fails. It Occurs when the Built-In store has been tested false.}\\
1 attribute (total 4):
\begin{itemize}
\item {\tt Ref}: a reference to the most recent failed {\bf Apply.2} event.
\end{itemize}

Example: {\tt GT: [104,$Fail$,@103,376]}

\end{itemize}

All the variables which occur in the initial goal will keep their original name in all their occurrences in the generic trace. 
Each actual trace event has a unique identifier, called the {\bf chrono}, which is an integer incremented by 1 at each new event.

\vspace{1mm}
The formal definition of trace generation will be given in section~\ref{fc-omega-t}. 

\section{Observational Semantics of \CHRv (OS-\CHRv)}
\label{sec:obsem}

We specify the observational semantics of \CHRv, OS-\CHRv, on the top of the operational semantics of section~\ref{ssec:opsem}, by specifying the transition function and the local functions of extraction $E_l$ and of reconstruction $I_l$. The resulting OS, called OS-\CHRv, is faithful by construction (by property~\ref{prop:faith}).


According to the definition~\ref{def:OS}, the observational semantics consists of $< S, R, A, Tr, E_l, I_l, S_0 >$. The definition of the transition function $Tr$ is given by the operational semantics. The definitions of $E_l, I_l$ are given in the next two sections. The others elements are as follows.

\vspace{1mm}
\begin{itemize}
\item $S$: {\em domain of virtual states}. It the set of configurations defined as a list of execution states, where an execution state ${\cal E}$ is defined by the $\tuple{A,S,B,T}_n$ as described in the section~\ref{ssec:opsem}.
\item $R$: finite set of {\em action types}: \{{\bf Solve+Wake, Activate, Reactivate, Apply.1, Apply.2, Drop, Default, Split, Fail}\}
\item $A$: {\em domain of actual states}: each state consists of a tuple of values of a subset of the attributes. All attributes are defined in the section~\ref{gt-omega-t}.
\item $S_0 \subseteq S$,  set of {\em initial states}, specified below.
\end{itemize}
In this presentations the chrono is omitted.

\subsection{OS-\CHRv: Extraction ($Tr, E_l$)}
\label{fc-omega-t}

This description is based on the transitions as described in the table~~\ref{ssec:opsem}. Each item corresponds to a type of action and it specifies the new generated actual trace event, using the previous state, the reaches state and the previously generated actual trace. The current generated actual trace is denoted $N$, an ordered sequence of the trace events. 
It has the form, for each type of action $r$ and transition $Tr(r, s) = s'$: $E_l(s, r, s', N) = a$, which will be represented by a rule:
\begin{center}
{\bf r} $s, N \mapsto s', N {++} a$
\end{center}

\vspace{1mm}
The initial configuration will be represented as 

${\cal L}_0 = [\tuple{A, \{\}, true, \{\}}_{1}] ,N=[]$

\begin{center}
{\small
\fbox{
\begin{minipage}[c]{12cm}
\begin{description}
 \item [Solve+Wake]$\\$
$[\tuple{[c|A], S, B, T}_{n} | {\cal L}], N \mapsto [\tuple{A'+A,S, c \land B, T}_{n} | {\cal L}],$
$N{++}[Wake, c, wakeup(S,c,B), n]$, where $SolveCond$. 

 \item [Activate] $[\tuple{[c|A], S, B, T}_{n} | {\cal L}],N \mapsto [\tuple{[c\#n:1|A], c \uplus S,  B,T}_{n+1} | {\cal L}],$\\ $N{++}[ActivateRDC, c, n]$, where $c$ is a rule-defined constraint

 \item [Reactivate] $[\tuple{[c\#i|A], S, B, T}_n | {\cal L}], N \mapsto [\tuple{[c\#n:1|A], \{c\#n\} \uplus S,  B, T}_{n} | {\cal L}]$, \\
$N{++}[ReactivateRDC,c,wake(c,N)]$, where $CondReac$ (see below)

 \item [Apply.1] $[\tuple{[c\#i:j|A], H_1 \uplus H_2 \uplus S, B, T}_n | {\cal L}], N \mapsto [\tuple{[c\#i:j|A], H_1 \uplus H_2 \uplus S, B, T}_n | {\cal L}]$, 
$N{++}[TryRule, r, {c\#i:j}, H_1, H_2, g, n] $ where $CondApp1$ (see below).

 \item [Apply.2]  $[\tuple{[c\#i:j|A], H_1 \uplus H_2 \uplus S, B, T}_n| {\cal L}], N \mapsto [\tuple{C+[H|A], H_1 \uplus S, e \wedge B, T'}_n| {\cal L}], $\\
$N{++}[ApplyRule, tryRule(N), addRDCs(C), addBICs(C), H_1, H_2 ,e , H, n]$
where $CondApp2$ (see below).

 \item [Drop] $[\tuple{[c\#i:j|A], S, B, T}_n | {\cal L}], N \mapsto [\tuple{A, S, B, T}_{n} | {\cal L}], N{++}[Drop, {c\#i:j}, n]$. Where c is an active constraint.

 \item [Default] $[\tuple{[c\#i:j|A], S, B, T}_n | {\cal L}], N \mapsto [\tuple{[c\#i:j+1|A], S, B, T}_{n} | {\cal L}], N{++}$\\
 $[Default, c\#i:j, {j+1}, n].$

 \item [Split] $[\tuple{[c_1\lor ... \lor  c_m|A], S, B, T}_n |{\cal L}], N \mapsto [\sigma_1,..., \sigma_m| {\cal L}], N{++}[Split, rule(N), n]$, where $CondSplit$ (see below).

 \item [Fail] $[{\cal E}|{\cal L}], N \mapsto {\cal L}, N{++}[Fail, rule(N), n]$, where $CondFail$ (see below).
\end{description}
\end{minipage}
}}
\end{center}

The conditions appearing in our observation semantics are defined as follows: 

$SolveCond$:  $c$ is built-in, and $A' = wakeup(S, c,B)$ defines which CHR constraints of $S$ are woken by adding the constraint $c$ to the built-in store $B$.

$CondReac$:  the function $wake: Constraint, Trace \mapsto Wake$  is responsible for selecting the Wake event that justifies the Reactivate.

$CondApp1$: where the $j^{th}$ occurrence of a constraint with same functor as $c$  exists in the head of a fresh variant of a rule $r @ H'_1 \backslash H'_2 \Leftrightarrow g | C$ and a matching substitution $e$, such that $chr(H_1) = e(H_1')$, $chr(H_2) = e(H_2')$ and  $\{(r,id(H_1){++}id(H_2))\} \notin T$. 

$CondApp2$: $C$ is the body of the rule $r @ H'_1 \backslash H'_2 \Leftrightarrow g | C$.  The
$tryRule: Trace \mapsto TryRule$ will retrieve the TryRule event generated by Apply.1. It will search for the event in the trace log, normally the event TryRule will be one step back.  $addRDCs: Body \mapsto Sequence(RDC)$ will select only the RDCs on the body; the function $addBICs: Body \mapsto Sequence(BIC)$ will select the BICs on the body. Same conditions of $Apply.1$  plus  $CT \models \exists (B) \land \forall B \supset \exists (e \land g)$ , $T' = T \cup \{(r,id(H_1){++}id(H_2))\}$, and $H = c\#i:j$ if $c$ is in $H_1$, $H = []$ if $c$ is in $H_2$, 
and $c \in H_1 \underline{\vee} H_2.$

$CondSplit$:  where $\sigma_i = \tuple{[A_i|A], S, B, T}_{n} $, for $1 \leq i \leq m$, and $rule: Trace \mapsto ApplyRule$ is a function that will retrieve the cause of the split, a  disjunctive rule.

$CondFail$: $n$ is the numbering of the failed state ${\cal E}$, and $rule(N)$ is a function that will retrieve the cause of the failure.

\subsection{OS-\CHRv: Reconstruction ($I_l$)}
\label{rec-omega-t}

We show here, by specifying a local reconstruction function $I_l$, that the observationa semantics is faithful, i.e. that the extracted (actual) generic trace contains all information needed to reconstruct the original virtual trace (the trace semantics equivalente to the given operational semantics).

\vspace{1mm}
The local reconstruction function takes a current state $s$ and the generated actual trace including the last generated event $a$, $N {++} a$; it identifies the type of action $r$ and produces the new virtual state $s'$. 
It has the form, for each type of action $r$: $I_l(s, N a) = (r, s')$, which will be represented by a rule:
\begin{center}
{\bf r} $s, N {++} a \mapsto s'$
\end{center}
with $a = [r | a']$.

\vspace{1mm}
The initial configuration will be represented as ${\cal L}_0 = [\tuple{A, \{\}, true, \{\}}_{1}] ,N=[]$

\vspace{1mm}
For each item corresponding to a type of action, the reconstructed state $s'$ is the same as the new state obtained by the corresponding transition $Tr(r,s) = s'$ as described in the table~\ref{ssec:opsem}. This constitutes the proof of faithfulness of OS-\CHRv, by property~\ref{prop:faith}.


\vspace{1mm}
The conditions consists of an optional condition part and a computation part. They are just used here to express computations.

\vspace{1mm}
\begin{center}
{\small
\fbox{
\begin{minipage}[c]{12cm}
\begin{description}
 \item [Solve+Wake]$\\$
$[\tuple{A, S, B, T}_{n} | {\cal L}], N{++}[Wake, c, C, n'] \mapsto [\tuple{C+A',S, c\land B, T}_{n'} | {\cal L}]$, if $SolveCond$.

 \item [Activate] $[\tuple{A, S, B, T}_{n} | {\cal L}], N{++}[ActivateRDC, c, n'] \mapsto [\tuple{[c\#n':1|A'], c \uplus S,  B,T}_{n+1} | {\cal L}]$, if $ActiCond$.

 \item [Reactivate] $[\tuple{A, S, B, T}_n | {\cal L}], N{++}[ReactivateRDC, {c\#i}, wakeEvent, n'] \mapsto [\tuple{[c\#n':1|A'], \{c\#n'\} \uplus S,  B, T}_{n'} | {\cal L}]$, if $CondReac$.

 \item [Apply.1] $[\tuple{A, S, B, T}_n |{\cal L}], N{++}[TryRule, r, {c\#i:j}, H_1, H_2, g, n'] \mapsto [\tuple{[c\#i:j|A'], H_1 \uplus H_2 \uplus S', B, T}_n | {\cal L}]$, if $CondApp1$.

 \item [Apply.2]  $[\tuple{A, S, B, T}_n| {\cal L}], N{++}[ApplyRule, r, RDCs, BICs, H_1, H_2, e, H, n'] \mapsto [\tuple{C+A, H_1 \uplus S', e \wedge B, T'}_n| {\cal L}]$, if $CondApp2$.

 \item [Drop] $[\tuple{A, S, B, T}_n | {\cal L}], N {++} [Drop, {c\#i:j}, n'] \mapsto [\tuple{A', S, B, T}_{n'} | {\cal L}]$, if $DropCond$.

 \item [Default] $[\tuple{A, S, B, T}_n | {\cal L}], N{++}[Default, {c\#i:j}, j', n'] \mapsto [\tuple{[{c\#i:j'}|A'], S, B, T}_{n'} | {\cal L}]$, if $DefCond$.

 \item [Split] $[\tuple{A, S, B, T}_n | {\cal L}], N{++}[Split, r, n'] \mapsto [\sigma_1,..., \sigma_m|{\cal L}']$, if $CondSplit$.

 \item [Fail] $[\tuple{A, S, B, T}_n | {\cal L}], N{++}[Fail, r, n'] \mapsto {\cal L}$, if $CondFail$.
\end{description}
\end{minipage}
}}
\end{center}

$SolveCond$:  $A=[c|A'] \wedge n=n' \wedge C=wakeup(S,c,B)$.\\

$ActiCond$: $A=[c|A'] \wedge n=n'$. \\

$CondReac$: $A=[c\#i|A'] \wedge wakeEvent=wake(c,N) \wedge n=n'$.\\

$CondApp1$: $A=[c\#i:j|A'] \wedge S=H_1 \uplus H_2 \uplus S' \wedge n'=n$. \\

$CondApp2$: $A = H+A'$, $C$ is the body of the rule $\langle r @ H'_1 \backslash H'_2 \Leftrightarrow g | C \rangle \wedge S= H_1 \uplus H_2 \uplus S' \wedge T' = T \cup \{(r,id(H_1)+id(H_2))\} \wedge n'=n$, and $H = c\#i:j$ if $c$ is in $H_1$, $H = []$ if $c$ is in $H_2$ and $c \in H_1 \underline{\vee} H_2.$. \\

$DropCond$: $A = [c\#i:j|A'] \wedge n'=n$. \\

$DefCond$: $A= [c\#i:j|A'] \wedge j'=j+1 \wedge n'=n$. \\

$CondSplit$: $A = \tuple{[A_1\lor ... \lor A_m|A'], S, B, T}_n \wedge \sigma_i = \tuple{[A_i|A], S, B, T}_{n} $, for $1 \leq i \leq m \wedge n'=n$. \\

$CondFail$: $n'=n$.

%
%

\vspace{2mm}
Notice that the local reconstruction function does not use the full generated trace, but the last actual trace event only. This guarantees a better efficiency for analysis. However this is obtained at the cost of including specific details into the generic trace. For example the attribute $n$ (state indicator) could be retrieved from the current virtual state; the fact to have it in the current actual trace event avoids some computation.

\lstnewenvironment{prolog}[1][]
{\lstset{language=Prolog,frame=single,#1}}
{}

\section{Prototyping the Operational Semantics of \CHRv}
\label{sec:osproto}

The objective of building a prototype of the operational semantics described in the section~\ref{ssec:opsem} is to produce tests of trace generation, in order to improve the quality of the design. In fact there is no way to prove that the specification given in an algebraic style, with some implicit or informal parts, is sound. Indeed the simple fact to implement in Prolog an executable specification with which it is possible to simulate the extraction of the generic trace is a step towards a better quality of the proposed formal specification.

\vspace{1mm}
The SWI-Prolog implementation of the operational semantics of \CHRv\ is built with the feature
to produce the generic trace. The following subsections illustrate the architecture of the proposed implementation. The full program is given in the annexe.

\subsection{\CHRv\ Syntax and Rule Compilation}
\label{SecCHRvSyntax}

\begin{center}
\begin{minipage}{5cm}
\begin{prolog}[caption=\CHRv\ Syntax,label=CHRvSyntax]
:- op(1100,xfx,\ ).
:- op(1180,xfx,==>).
:- op(1180,xfx,<=>).
:- op(1190,xfy,@).
\end{prolog}
\end{minipage}
\end{center}
The \CHRv\ Syntax is represented Prolog infixed operators as presented in Listing \ref{CHRvSyntax}. This
syntax accepts \CHRv\ rules like in the following CHR program:

\begin{prolog}[frame=tb]
transitivity @ leq(vX,vY), leq(vY,vZ) ==> leq(vX,vZ).
idempotency @ leq(vX,vY) / leq(vX,vY) <==> true.
antisymmetry @ leq(vX,vY) , leq(vY,vX) <==> vX=vY.
\end{prolog}
However, the \CHRv\ Operational Semantics does not search rules individually. It searches for the i-th occurrence of some 
constraint operator in program. In order to provide the appropriate information, the \CHRv\ implementation 
generates a compiled version of this \CHRv\ program as presented below:

\begin{prolog}[frame=tb]
rule(leq,1,transitivity,[vX,vY],
     [],[leq(vX,vY),leq(vY,vZ)],[],[leq(vX,vZ)]).
rule(leq,2,transitivity,[vY,vZ],
     [],[leq(vX,vY),leq(vY,vZ))],[],[leq(vX,vZ)]).
rule(leq,3,idempotency,[vX,vY],
     [leq(vX,vY)],[leq(vX,vY)],[],[true]).
rule(leq,4,idempotency,[vX,vY],
     [leq(vX,vY)],[leq(vX,vY)],[],[true]).
rule(leq,5,antisymmetry,[vX,vY],
     [active,leq(vY,vX)],[],[],[vX=vY])
rule(leq,5,antisymmetry,
     [vY,vX],[leq(vX,vY),leq(vY,vX)],[],[],[vX=vY])
\end{prolog}
the predicate \verb+rule(Op,Ind,Rule,Args,Remove,Keep,Guard,Body)+ identifies the \CHRv\ rule that contains the \verb+Ind+th occurrence
of \verb+Op+ in its head.

\subsection{Auxiliary Functions}

To implement \CHRv\ Operational Semantics, it was necessary to define some
auxiliary functions responsible for variable management and built-ins implementation.

\subsubsection*{Variables}

In the proposed \CHRv\ implementation variables can be classified as local or global. Local
variables are used in \CHRv\ Rules and should be replaced by global variables or constants when the rule
is executed. \CHRv\ variables are not directly implemented using Prolog's variables because it was desirable to give more informations about variable names to trace and the search method.

In the implementation, variable can be any Prolog's atom. The predicates \verb+global_variable\1+ and \verb+local_variable+
are used to decide if some term represents a global or local variable, respectively. In the initial implementation local variables 
are atoms whose first letter is ``v'' and the second is a upper-case letter. Global variables are terms in the form \verb+v(_)+ or atoms 
initiated with ``v'', followed by a lower-case letter. 

Local variables should be replaced by constants or global variables when a \CHRv\ is executed. The predicate: \verb+instantiate_locals(T1,T2,Binds)+
replaces the local variables found in \verb+T1+ by other values, producing the term \verb+T2+; the argument \verb+Binds+ contains
the performed substitutions. In \verb+T1+ local variables are not replaced by a constant, the \verb+instantiable_locals+ predicate replaces
the local variable by an undefined prolog variable. The auxiliary predicate \verb+allocate_unusedvars+, instantiates these
undefined variables with free global variables.

\subsubsection*{Built-ins Constraint}

The equality constraint (\verb+=+) is the only built-in constraint implemented. The built-in memory contains a list of 
normalized equalities in the form:\\ 
\verb+GlobalVariable = (GlobalVariable | Constant)+.\\
To manipulate the built-in memory, two auxiliary functions are defined:

\begin{itemize}
\item \verb+solve_builtin(C,B1,B2,UDC1,UDC2)+: inserts the built-in constraint \verb+C+ in the built-in memory \verb+B1+,
producing the built-in memory \verb+B2+. If the memory becomes inconsistent, the resultant memory produced is \verb+false+. 
When the built-in constraint is inserted in memory, the User Defined Constraints Memory UDC1 is analyzed and the constraints
affected by the built-in insertion are listed in \verb+UDC2+.
\item \verb+replace(T1,B,T2)+: replaces the global variables in T1 by their values (if they are defined), 
producing the term T2.
\end{itemize}

\subsection{\CHRv\ Operational Semantics Implementation}

The \CHRv\ Operational Semantics implementation is similar to the semantics proposed in Section 3.2. To illustrate this implementation,
Listing~\ref{PrologSolveWake} presents the Prolog implementation of the rules {bf Solve+Wake} and {\bf Activate}. 

\begin{center}
\begin{minipage}{10cm}
\begin{prolog}[caption=Solver+Wake Rule, label=PrologSolveWake]
[([C|A],UDC,B,H,N)|Tl]-->[(A2,UDC2,B2,H,N)|Tl] :-
       is_builtin(C),
       solve_builtin(C,B,B2,UDC,Wokeup),
       merge(Wokeup,A,A2),
       remove(UDC,Wokeup,UDC2).

[([C|Tl],UDC,B,H,N)|T] ---> ([([(C3#0)|Tl],[C3|UDC],B,H,N2)]|T) :-
       is_UDC(C),
       replace(C,B,C2),
       C3 = C2^N,
       N2 is N + 1.
\end{prolog}
\end{minipage}
\end{center}

\subsection{Generic Trace Extraction}

To generate the generic trace in the executable operational semantics, the predicate: \verb+gentrace(Id,Data)+ in added. It generates a trace with the information contained in \verb+Data+. The \verb+Id+ represents the ordering number of
the generated trace. To illustrate this, Listing~\ref{DecoratedPrologSolveWake} contains the modified version of the rules presented in Listing~\ref{PrologSolveWake}.

\begin{center}
\begin{minipage}{10cm}
\begin{prolog}[caption=Solver+Wake Rule, label=DecoratedPrologSolveWake]
[([C|A],UDC,B,H,N)|Tl] ---> ([(A2,UDC2,B2,H,N)|Tl])) :-
       is_builtin(C),
       solve_builtin(C,B,B2,UDC,Wokeup),
       insert_reference(Wokeup,Pos,RefWokeup),
       merge(RefWokeup,A,A2),
       remove(UDC,Wokeup,UDC2),
       gentrace(Pos,[wake,C,[wokeup,Wokeup]]).

[([C|Tl],UDC,B,H,N)|T] ---> ([([(C3#0)|Tl],[C3|UDC],B,H,N2)|Tl]) :-
       isUDC(C),
       replace(C,B,C2),
       C3 = C2^N,
       N2 is N + 1,
       gentrace(_,[activate,C,N]).
\end{prolog}
\end{minipage}
\end{center}

\subsection{Executing a \CHRv\ Program}

The \CHRv\ Operational Semantics is a small-step semantics: it describes a single computation step in program execution.
To execute a \CHRv\ program, the implementation defines the operator (\verb+--*->+) that executes continuously the Operational Semantics Operator (\verb+--->+).

\section{Prototyping of the generic \CHRv\ Trace Using SWI Prolog}
\label{sec:proto}

A generic \CHRv tracer for SWI-Prolog was developed, as the default trace output contains most of the necessary info to build the GT. In Section~\ref{sec61}, we introduce the SWI Prolog debug output trace produced when executing CHR rule bases. In the section~\ref{sec62}, we explain the SWI CHR trace; Section~\ref{ssec:oschrproto} presents the way to map the SWI produced trace into OS-\CHRv, the actual generic trace.In the last two sections we present some trace queries and an example\footnote{Source-code available on \url{http://www.assembla.com/code/generic-tracer/subversion/nodes}}.

\subsection{Running Example}
\label{sec61}

The generic trace will be illustrated on a simple disjunctive graph-coloring problem. The following \CHRv rules define a graph coloring solution:

\begin{verbatim}
node1@ node(r1,C) ==> (C = r ; C = b ; C = g).
node2@ node(r2,C) ==> (C = b ; C = g).
node3@ node(r3,C) ==> (C = r ; C = b).
node4@ node(r4,C) ==> (C = r ; C = b).
node5@ node(r5,C) ==> (C = r ; C = g).
node6@ node(r6,C) ==> (C = r ; C = g; C = t).
node7@ node(r7,C) ==> (C = r ; C = b).
startGraph@ edges<=> edge(r1,r2), edge(r1,r3), edge(r1,r4),
edge(r1,r7), edge(r2,r6), edge(r3,r7), edge(r4,r5), edge(r4,r7),
edge(r5,r6), edge(r5,r7).
wrong@ edge(Ri,Rj), node(Ri,Ci), node(Rj,Cj) ==> Ci = Cj | false.
l1@ l([ ],[ ]) <=> true.
l2@ l([R|Rs],[C|Cs]) <=> node(R,C), l(Rs,Cs).
\end{verbatim}

This CHR base handles a graph-coloring problem with at most 3 colors where any two nodes connected by a common edge must not have the same color. The constrain \verb|node(r1,C)| means that node r1 has color C, the \verb1startGraph1 rule defines edges between nodes of a graph and the \verb1wrong1 rule assures that two nodes will have different colors. A small part of the SWI trace of the execution of the following goal ``edges, l([r1,r7,r4,r3,r2,r5,r6],[C1,C7,C4,C3,C2,C5,C6])." is depicted here:

\begin{verbatim}
CHR:   (1) Insert: node(r1,_G9234) # <384>
CHR:   (2) Call: node(r1,_G9234) # <384>
CHR:   (2) Try: node(r1,_G9234) # <384> ==> _G9234=r;_G9234=b;_G9234=g.
CHR:   (2) Apply: node(r1,_G9234) # <384> ==> _G9234=r;_G9234=b;_G9234=g.
...
CHR:   (2) Insert: node(r7,_G9235) # <386>
CHR:   (3) Call: node(r7,_G9235) # <386>
CHR:   (3) Try: node(r7,_G9235) # <386> ==> _G9235=r;_G9235=b.
CHR:   (3) Apply: node(r7,_G9235) # <386> ==> _G9235=r;_G9235=b.
CHR:   (4) Wake: node(r7,r) # <386>
CHR:   (4) Try: node(r1,r) # <384>, edge(r1,r7) # <376>,
                node(r7,r) # <386> ==> r=r | false.
CHR:   (4) Apply: node(r1,r) # <384>, edge(r1,r7) # <376>,
                  node(r7,r) # <386> ==> r=r | false.
CHR:   (3) Fail: node(r7,r) # <386>
CHR:   (4) Wake: node(r7,b) # <386>
...
\end{verbatim}

This subset of the execution is responsible for trying the value C1 and C7 as red then backtracking because C1 and C7 cannot have the same colors. 

Informal definitions of the trace events of SWI-Prolog can be found here\footnote{\url{http://www.swi-prolog.org/pldoc/doc_for?object=section(2,'7.4', swi('/doc/Manual/debugging.html'))}}.  Some problems occur when an analysis of the trace is needed: the try/apply transition has no rule name, it's very difficult to link the name of the generated var with the name of the variable passed as goal since all vars were renamed and there isn't an efficient way to query it.  

\subsection{Understanding SWI-Prolog Trace}
\label{sec62}

The SWI-Prolog debugging output will produce a trace according to the following grammar depicted in table~\ref{table:swiGrammar}.
\begin{table}[t] 
\begin{tabular}[t]{|l|l|l|}
\hline SwiTrace & $T$ & ::= $\{P_1...P_m\}, m \geq 0$\\
\hline Ports & $P$ & ::= $C | E | F | R| W | I | RE | TY| A$  \\
\hline Call & $C$ & ::= $"CHR: (depth)  Call:"  CT$  \\
\hline Exit & $E$ & ::= $"CHR: (depth)  Exit:"  CT$  \\
\hline Fail & $F$ & ::= $"CHR: (depth)  Fail:"  CT$  \\
\hline Redo & $R$  & ::= $"CHR: (depth)  Redo:"  CT$  \\
\hline Wake & $W$ & ::= $"CHR: (depth)  Wake:"  CT$  \\
\hline Insert & $I$ & ::= $"CHR: (depth)  Insert:"  CT$  \\
\hline Remove & $RE$ & ::= $"CHR: (depth)  Remove:"  CT$  \\
\hline Constraint & $CT$ & ::= $constraintName(t_1...t_n) "\# <id>" $  \\
\hline Try & $TY$ & ::= $TY_{propagation} | TY_{simplification} | TY_{simpagation}$\\
\hline Try2 & $TY_{propagation}$ & ::= $"CHR: (depth)  Try:" H_k "==>" G "|" B |$\\ 
\hline Try3 & $TY_{simplification}$ & ::= $"CHR: (depth)  Try:" H_r "<=>" G "|" B |$  \\
\hline Try4 & $TY_{simpagation}$ & ::= $"CHR: (depth)  Try:"  H_k\backslash H_r "<=>" G "|" B $\\
\hline Apply & $A$ & ::= $A_{propagation} | A_{simplification} | A_{simpagation}$ \\
\hline Apply2 & $A_{propagation}$ & ::= $"CHR: (depth)  Apply:" H_k "==>" G "|" B |$\\ 
\hline Apply3 & $A_{simplification}$ & ::= $"CHR: (depth)  Apply:" H_r '<=>' G "|" B |$  \\
\hline Apply4 & $A_{simpagation}$ & ::= $"CHR: (depth)  Apply:"  H_k '\backslash' H_r "<=>" G "|" B $\\
\hline Head & $H$ & ::= $CT$ | $CT "," H $  \\
\hline Constraint2 & $CT_2$ & ::= $constraintName(t_1...t_n)$  \\
\hline Body or BIC& $G,B$ & ::= $CT_2 $ | $CT_2 "," G$   \\
\hline
\end{tabular}
\caption[Swi-Trace's Grammar]{Swi-Trace's Grammar}
\label{table:swiGrammar}  
\end{table}

SWI-Prolog's default search strategy implemented is depth-first, the parameter $depth$ indicates the transaction's actual level in the search tree and $id$ is the constraint's unique identifier. Small parts of the trace will be shown and explained.
\begin{verbatim}
CHR: (0) Insert: edges # <372>
CHR: (1) Call: edges # <372>
\end{verbatim}
The trace produced by these two ports are responsible for removing a constraint from the goal and insert in execution stack. Notice that in SWI's trace they always appear together.
\begin{verbatim}
CHR: (2) Exit: edge(r1,r2) # <373>
\end{verbatim}
The computation over the active constraint is finished.
\begin{verbatim}
CHR: (3) Try: node(r7,_G9235) # <386> ==> _G9235=r;_G9235=b.
CHR: (3) Apply: node(r7,_G9235) # <386> ==> _G9235=r;_G9235=b.
\end{verbatim}
The trace produced by Try and Apply ports only happens together and it means that a rule was tried and applied respectively.
\begin{verbatim}
CHR: (4) Wake: node(r7,r) # <386>
\end{verbatim}
The Wake port is traced when a built-in is solved, in this case the constraint was reactivated because C7 = r.

\subsection{Transforming SWI Tracer into OS-\CHRv}
\label{ssec:oschrproto}

The SWI's output is not enough to perform a translation to OS-\CHRv . We do need information about what was the goal passed (to map all variables) and access to the source-code (get the rule names). The inputs and outputs of the algorithm is illustrated by figure~\ref{fig:traceTranslator}. The Translator's algorithm will be explained by example.

\begin{figure}[h]
\centering
\includegraphics[width=0.8\linewidth]{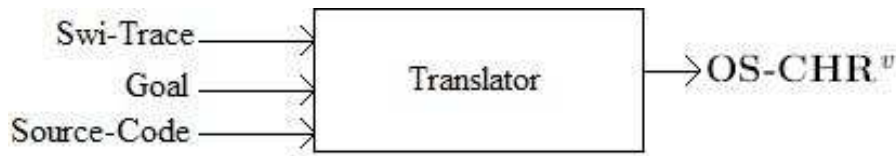}
\caption[Translator structure]
{Translator structure}
\label{fig:traceTranslator}  
\end{figure} 

For the Wake port the we have to look to previous values of the trace and determine what BIC solving fired this transition and also a Reactivate event will be produced. In this case \verb|C1 = a0| was the cause.
\begin{verbatim}
CHR:   (3) Wake: node(r1,a0) # <359>
-> [61,Wake,[=,C1,a0],[woken,[[node,r1,C1,359]]],360]
   ++
[62,Reactivate,[node,r1,a0,359],@61,360]
\end{verbatim}

Some ports have direct connection with OS-\CHRv: Call and Exit. All others ports will need a computation using the generated SWI trace. The Insert port is ignored because is redundant with the port Call.
\begin{verbatim}
CHR:   (1) Call: edges # <330>
-> 
GT: [0,ActivateRDC,[edges,330],331]

\end{verbatim}

For the tryRule map, we have to look the source code and try to find what is the rule name for that transition, and while generating the trace we keep track of the active constraint.
\begin{verbatim}
CHR:   (7) Try: node(r4,a0) # <371>, edge(r4,r5) # <340>, 
node(r5,a0) # <375> ==> a0=a0 | fail.
-> 
GT: [102,TryRule,failure@,[node,r5,a0,375],[keep,[[node,r4,a0,371], 
[edge,r4,r5,340], [node,r5,a0,375]]],[remove,[]],
[guard,[[=,a0,a0]]],376]
\end{verbatim}

ApplyRule is the most complicated map, we have to link(@) with the tryRule and check if it has a disjunctive body, if it is we have keep track to link correctly with a possible failure status; it can generate a lot of trace event depending on how many constraints were added/removed and possibly a split transition. The link function will recover the real name of the variable, in this case \_G9245 = C8
\begin{verbatim}
CHR:   (9) Apply: node(r8,_G9245) # <390> ==> _G9245=a0;
_G9245=a1;_G9245=a2;_G9245=a3.
-> 
[141,ApplyRule,@140,[addrdc],[addbic,[=,C8,a0],[=,C8,a1],
[=,C8,a2],[=,C8,a3]],[keep,[[node,r8,C8,390]]],
[remove,[]],[match,[node(_,C)=node(r8,C8)]],
[node,r8,C8,390],391]
++ 
GT: [142,Split, @141, 391]
\end{verbatim}

The Exit port has a direct map with OS-\CHRv.
\begin{verbatim}
CHR:   (2) Exit: edge(r1,r9) # <332>
->
GT: [4,Drop,[edge,r1,r9,332],333]
\end{verbatim}

The Fail port will produce a Fail event with its cause, a rule witch body contains the $false$ built-in.
\begin{verbatim}
CHR:   (6) Fail: node(r5,a0) # <375>
-> 
[109,Fail,@108,376]
\end{verbatim}

\subsection{Trace Querying}
\label{ssec:traceq}

The produced generic trace is represented by a sequence of java objects.
The language we choose for querying the trace is the SQL for Java Objects (JoSQL), its implementation can be found here\footnote{\url{http://josql.sourceforge.net/}}.

\vspace{1mm}
These are some examples of query in JoSQL: (on a trace of example~\ref{sec:proto})
\begin{itemize}
\item SELECT * FROM trace WHERE type ='{\tt ApplyRule}' AND (name='wrong@' OR name='node1@' OR name='node2@')
Will select the trace of the execution of rules: wrong, node1,node2.
\item SELECT * FROM trace WHERE type ='{\tt Split}' OR type ='{\tt Fail}'
Will select all split and fail transition.\\
\item SELECT addrdc,remove,addbic FROM trace WHERE type ='{\tt ApplyRule}'
\end{itemize}

The last query is more general and can by used by any application which need to handle a current state of the constraint store.
\subsection{Trace Analyzer}
\label{ssec:traceAnalyzer}
A Pretty Printer Analyzer was developed. This analyzer has a JoSQL query as parameter and prints the events that match the query.  The OS-\CHRv trace for leq with the goal  \textit{leq(A,B),leq(B,C),leq(C,A)}  is depicted. All events and attributes were selected by JoSQL, they are listed in Section \ref{gt-omega-t}. 

\begin{verbatim}
[0,ActivateRDC,[leq,A,B,201],202]
[1,Drop,[leq,A,B,201],202]
[2,ActivateRDC,[leq,B,C,202],203]
[3,TryRule,transitivity@,[leq,B,C,202],[keep,[[leq,A,B,201], 
 [leq,B,C,202]]], [remove,[]],[guard,[]],203]
[4,ApplyRule,@3,[addrdc,[leq,A,C]],[addbic],[keep,[[leq,A,B,201], 
[leq,B,C,202]]],[remove,[]],[match,[leq(X,Y)=leq(A,B)],
 [leq(Y,Z)=leq(B,C)]],[leq,B,C,202],203]
[5,ActivateRDC,[leq,A,C,204],205]
[6,Drop,[leq,A,C,204],205]
[7,Drop,[leq,B,C,202],205]
[8,ActivateRDC,[leq,C,A,205],206]
[9,TryRule,antisymmetry@,[leq,C,A,205],[keep,[]],
 [remove,[[leq,C,A,205], [leq,A,C,204]]],[guard,[]],206]
[10,ApplyRule,@9,[addrdc],[addbic,[=,C,A]],[keep,[]],
 [remove,[[leq,C,A,205], [leq,A,C,204]]],
 [match,[leq(X,Y)=leq(C,A)],[leq(Y,X)=leq(A,C)]],[leq,C,A,205],206]
[11,Wake,[=,C,A],[woken,[leq,B,C,202],[leq,A,B,201]],206]
[12,Reactivate,[leq,B,A,202],@11,206]
[13,TryRule,antisymmetry@,[leq,B,A,202],[keep,[]],
 [remove,[[leq,B,A,202], [leq,A,B,201]]],[guard,[]],206]
[14,ApplyRule,@13,[addrdc],[addbic,[=,B,A]],[keep,[]],
 [remove,[[leq,B,A,202],  [leq,A,B,201]]],
 [match,[leq(X,Y)=leq(B,A)],[leq(Y,X)=leq(A,B)]],[leq,B,A,202],206]
[15,Drop,[leq,A,A,202],206]
[16,Drop,[leq,A,A,205],206]
\end{verbatim}

%

\section{Experimentation}
\label{sec:exp}
To evaluate our approach 3 benchmarks were set: 10-Queens, primes\footnote{\url{http://people.cs.kuleuven.be/~tom.schrijvers/Research/CHR/chr_benchmarks/primes.chr}} and a compiled example of scheduling from CHORD\cite{msc-chord}, available on its test folder, the reason for choosing a CHORD example was the complexity, more than 100 rules. All results are shown in the following table.

All the experiments were performed on a PC with Pentium Core 2 Duo processor running at 2,4 GHz, with 4 GB of RAM and 1.5GB were reserved to the Java heap. The Prolog trace generator and the trace querying process are two different process as described by Langevine\cite{langevine08}.

\vspace{1mm}
The results are depicted in the table~\ref{lab-epx}. Each line corresponds to a program. The firts column gives the execution time without tracing (trace off); the second, the execution time with production of the SWI trace (trace on), the third the time in generic trace mode, and the last column gives the ratio between the sizes of both traces.

\vspace{3mm}
\begin{table} [h] 
\begin{tabular}{|l|l|l|l|l|}
  \hline
  \multicolumn{5}{|c|}{\CHRv tracing evaluation (time)} \\
  \hline
  \textbf{Problems} & No Trace & Swi Trace & \CHRt &\textbf{Size of the Trace}\\
  & & & & \textbf{(SWI/\CHRt)}\\ 
  scheduling & 0.1s  & 0.2s&0.27s & 0.5M / 0.5MB\\ 
   primes & 57s & 1min & 1min 05s & 5.4MB / 6.3MB\\
  10-Queens & 7s & 1 min 14s& 1min 25s& 59.7MB / 71.7MB \\
 graphColoring & 0.007s & 0.025s & 0.083s& 14.5KB / 21KB\\
  \hline
\end{tabular}\\\\
\caption{Experiments}
\label{lab-epx}
\end{table}

The following queries were done:
\begin{verbatim}
scheduling> g []. %starts chord computation
primes> candidates(8000). %calculates primes upto 8000
10-Queens> solveall(10,N,S). % give all solutions for 10 Queens.
graphColoring> edges, l([r1,r7,r4,r3,r2,r5,r6],[C1,C7,C4,C3,C2,C5,C6]). 
%graph with 7 edges
\end{verbatim}  

We observe that there is an (expected) slowdown in debugging modes (SWI trace and generic trace). It is slightly higher for the generic trace as it uses the SWI trace. Querying the generated trace does not slowdown more, since it can be done in paralel with trace generation.

For the selection of subtraces, tests executed on JoSQL show that that a list of 1,000,000 generic trace events can be queried in about 1.5s.

It must be noted that we limit the queries to patterns attached to a single event, limiting thus the complexity of the queries. Sophisticated queries envolving undetermined number of trace events could speed up seriously the performances.

\section{Discussion}
\label{sec:disc}

Several aspects of such a generic trace were explored on \cite{pierrerafRR09}, in particular its relations with components software development, the use of the fluent calculus to prototype traces and the use of object oriented specification methods. The generic trace presented in that work is thus limited to the simple theoretical operational semantics $\omega_t$ \cite{fruhwirth2003essentials} and therefore is less precise than the one given here.

Our approach of the observational semantics relies to abstract interpretation. The OS is similar to the ``Observable Semantics'' of Lucas \cite{lucas00} or the partial trace semantics of Cousot \cite{CousotPOPL02}. The parameters used to describe the execution states are, as expressed by Lucas, ``syntactic objects used to represent the conduct of operational mechanisms''. The traces are abstract representations of \CHRv\  semantics which allow to take into account the sole details we want to consider as common to different implementations. The (abstraction) relations between a generic trace and the traces of specific implementations of solvers are explored in \cite{cp11hal}, together with a compliance proof method.
Furthermore the generic trace contains a set of details considered as useful in several debugging tasks with several levels of refinement or observation. It could be enriched according to different needs~\footnote{An extensive study about the needs for constraint debugging can be found in \cite{DBLP:conf/discipl/2000}.}, or refined without changing the semantics of the already existing one. 

This way to proceed is opposite to the frequently adopted approach as, in particular, in \cite{simonisDFMQC10}, where a set of (visual) debugging tools is defined together with their input data, which consists of a restricted trace containing the minimal needed information. In our approach, we specify a semantically rich trace which can be used as input data for a potentially larger set of tools. 
The choice of the data to trace is made on the basis of a high level operational semantics, not on the basis of some specific debugging need. However the generic trace is designed in such a way that most of debugging tools devoted to the analysis of CHR resolution behavior may find in this trace what they need. As a consequence, based on this observational semantics, the work of implementation of the tracer and the work of designing debugging tools can be performed independently.

One may however feel that implementing a full generic trace is too much work demanding or that the resulting tracer performance will be considerably slow down. It has been shown in \cite{langevine08} that a generic approach may have more advantages than drawbacks in the sense that there may be a good trade-off between a very detailed generic trace (based on a more refined operational semantics) and the use of a trace driver able to query efficiently the generic trace, with a significant improvement in portability of debugging tools. We have shown here, that the implementation of the \CHRv\  generic trace in SWI-Prolog CHR implementation can easily be performed on the top of an existing tracer, resulting in a generic tracer practically as efficient as the original one on which it is based.

\section{Conclusion}
\label{sec:concl}
We have presented a first observational semantics of \CHRv (a formal specification of a \CHRv\  generic tracer), and two prototypes; the first is an executable operational semantics in Prolog which may produce a virtual generic trace; the second is a generic tracer of SWI CHR, based on the CHR SWI Prolog trace. The first helped to improve the quality of the formal operational semantics (in fact several corrections and/or improvements have been detected). The SWI CHR generic trace prototype shows that the generic trace can be easily and efficiently implemented on existing \CHRv\  implementations. The interest of the ``generic approach'' leads in the portability of analysis tools developed on the basis of this trace and the variety of possible trace based applications.

We do not claim that the \CHRv\  observational semantics which is presented here is the ultimate one. More refined observational semantics could be considered, including several levels of refinements (for example combining with Prolog semantics in Prolog based implementations); we just have shown that this approach can be realistic and useful in a great variety of CHR based software development.

Future work will concern more experimentation and improvements of the generic trace, OO based CHR implementation including a generic trace, and generic trace for hybrid constraint solvers.

\newpage

\begin{appendix}
\section{ANNEX: Operational Semantics in Prolog}
\label{ann:osproto}

The following is an implementation of the refined operational semantics of \CHRv\ in SWI-Prolog.  There are 6 files and 390 lines of Prolog's code.

\vspace{2mm}
\textbf{util.pl code:}
\begin{verbatim}
:- dynamic(val/2).

counter(Name,C) :- val(Name,C), !, retract(val(Name,C)), 
                   C2 is C + 1,
assert(val(Name,C2)).
counter(Name,0) :- assert(val(Name,1)).

% Auxiliary list functions

elem(E,[E|_]).
elem(E,[_|T]) :- elem(E,T).

subset([],_).
subset([H|T],L) :- elem(H,L),subset(T,L).

alldifferent([]).
alldifferent([H|T]) :- \+ elem(H,T), alldifferent(T).

remove([],_E,[]).
remove([X|T1],E,T2) :- (elem(X,E) -> T2=T3 ; T2=[X|T3]), 
                       remove(T1,E,T3).

insert_end(E,X) :- var(X), !, X=[E|_].
insert_end(E,X) :- X=[_H|T], insert_end(E,T).

closelist(X) :- var(X), !, X=[].
closelist(X) :- X=[_H|T], closelist(T).

at(M,X,Y) :- var(M), M=[(X=Y)|_].
at(M,X,Y) :- nonvar(M),M=[(X=Y2)|_], !, Y=Y2.
at(M,X,Y) :- nonvar(M),M=[(X2=_)|M2], X2\=X, at(M2,X,Y).

%% replace(T1,K,Val, T2) replaces all occuorence of K in T1 by Vals
%% resulting in term T2

replace(T1,Key,Val,T2) :- Key==T1, T2=Val.
replace(T1,Key,Val,T2) :- T1=..[F|Arg1],
replace_list(Arg1,Key,Val,Arg2), T2=..[F|Arg2].

replace_list([],_Key,_Val,[]).
replace_list([H1|T1],Key,Val,[H2|T2]) :-
replace(H1,Key,Val,H2),replace_list(T1,Key,Val,T2).

replace(T1,[],T1).
replace(T1,[(K=V)|M],T2) :- replace(T1,K,V,T_), replace(T_,M,T2).

%% For all local variables not initialized in the Head or Guard, 
%% it creates a new global variable to replace this local one

allocate_unusedvars([]).
allocate_unusedvars([(_K=V)|TL]) :- var(V), 
  counter(lastGlobalAllocated,N), V=v(N), allocate_unusedvars(TL).
allocate_unusedvars([(_K=V)|TL]) :- nonvar(V), 
                                    allocate_unusedvars(TL).


% Creates a Map that assigns all local_variables of T1 to
% an undefined value and T1 is the term T1 with these replacements
% applied

instantiate_locals(T1,T2,Map) :- local_variable(T1), !, 
                                 at(Map,T1,T2).
instantiate_locals(T1,T2,Map) :- T1=..[F|Arg1], 
instantiate_list_locals(Arg1,Arg2,Map), T2=..[F|Arg2].

instantiate_list_locals([],[],_).
instantiate_list_locals([H1|T1],[H2|T2],M) :-
      instantiate_locals(H1,H2,M),instantiate_list_locals(T1,T2,M).
\end{verbatim}

\textbf{Compiler code (compiler.pl):}

\begin{verbatim}
%% CHR Syntax
:- op(1100,xfx,\ ).
:- op(1180,xfx,==>).
:- op(1180,xfx,<=>).
:- op(1190,xfy,@).

% Transforms CHR sequences (A,B,C) in Prolog Lists [A,B,C]
makelist(true,[]) :- !.
makelist((X,Y),L) :- !,makelist(X,L1),makelist(Y,L2), merge(L1,L2,L).
makelist(L,[L]).

% testCase(ConstraintName,Index,Arguments,RuleName,Keep,Remove,
%          Guard,Body,returnFlag)
:- dynamic(rule/9).

% A Global variable a term v(N) or vU where U is not a uppercase 
% letter.
global_variable(v(_)) :- !.
global_variable(X) :- atom(X),atom_codes(X,[118,L|_]), (L<65 ; L>90).

% A Local variable a term vUxxx where U is an uppercase letter
local_variable(X)  :- atom(X),atom_codes(X,[118,L|_]), L>=65, L=<90.


calculateGuardBody(G|B,G,B) :- !.
calculateGuardBody(B,true,B).

%dbg(P) :-P, writef('DBG: %t\n',[P]).
%dbg(P) :-writef('start %t\n',[P]), P.
%dbg(P) :-writef('START: %t\n',[P]), P,writef('END: %t\n',[P]),!.
%dbg(P) :-writef('FAIL: %t\n',[P]),false.
dbg(P) :-P.

removeE([],_,[]).
removeE([E|T],E,T2) :- !,remove(T,E,T2).
removeE([X|T],E,[X|T2]) :- remove(T,E,T2).


% loop(Var,List,Pred)
loop(_,[],_).
loop(Var,[H|_],P) :- Var=H, P.
loop(Var,[_|T],P) :- loop(Var,T,P).

simpagation(N,Keep,Remove,Guard,Body):- 
  (N @ K \ R <=> G | B) ,  makelist(K,Keep),
  makelist(R,Remove), makelist(G,Guard), makelist(B,Body).
simpagation(N,Keep,Remove,[],Body)   :- (N @ K \ R <=> B) , 
 B\=(_|_), makelist(K,Keep),  makelist(R,Remove), makelist(B,Body).
simpagation(N,[],Remove,Guard,Body)  :- (N @ R <=> G | B) ,  
 R\=(_\_), makelist(R,Remove), makelist(G,Guard), makelist(B,Body).
simpagation(N,[],Remove,[],Body)     :- (N @ R <=> B) ,  
        R\=(_\_), B\=(_|_), makelist(R,Remove), makelist(B,Body).
simpagation(N,Keep,[],Guard,Body)    :- (N @ K ==> G | B),  
          makelist(K,Keep), makelist(G,Guard),  makelist(B,Body).
simpagation(N,Keep,[],[],Body)       :- (N @ K ==> B) , 
                   B\=(_|_),  makelist(K,Keep), makelist(B,Body).

compile :- retractall(rule(_,_,_,_,_,_,_,_,_)), 
     retractall(val(_,_)), simpagation(N,Keep,Remove,Guard,Body),
     ((elem(Active,Keep), InRemove=false);
        (elem(Active,Remove),InRemove=true)),
     Active=..[Op|Args], counter(Op,Index),
     assert(rule(Op,Index,Args,N,Keep,Remove,Guard,Body,InRemove)),
     fail.
compile.
\end{verbatim}

\textbf{Code generator (codegenerator.pl):}
\begin{verbatim}
optimize(if([],C),C2) :- !,optimize(C,C2).
optimize(if(true,C),C2) :- !,optimize(C,C2).
optimize(if([H|T],C),if(H,C2)) :- !,optimize(if(T,C),C2).
optimize(if(X,C),if(X,C2)) :- !,optimize(C,C2).

optimize(while([],C),C2) :- !,optimize(C,C2).
optimize(while(true,C),C2) :- !,optimize(C,C2).
optimize(while([H|T],C),while(H,C2)) :- !,optimize(while(T,C),C2).
optimize(while(X,C),if(X,C2)) :- !,optimize(C,C2).

optimize(seq([],B),B2) :- !, optimize(B,B2).
optimize(seq(B,[]),B2) :- !, optimize(B,B2).
optimize(seq(A,B),seq(A2,B2)) :- !, optimize(A,A2),optimize(B,B2).
optimize([H],H2) :- !, optimize(H,H2).
optimize([H|T],seq(H2,T2)) :- !, optimize(H,H2), optimize(T,T2).
optimize(X,X).

pprinter(if(T,C),Spaces) :- !,write(Spaces),write('if ('), 
   pprinter(T,''), write(') {\n'), concat(Spaces,'  ',Spaces2),
   pprinter(C,Spaces2), write('\n'),write(Spaces),write('}').
pprinter(while(T,C),Spaces) :- !,write(Spaces),write('while ('), 
   pprinter(T,''), write(') {\n'), concat(Spaces,'  ',Spaces2), 
   pprinter(C,Spaces2), write('\n'),write(Spaces),write('}').
pprinter(seq(X,Y),Spaces) :- !,pprinter(X,Spaces),nl,
   pprinter(Y,Spaces).
pprinter(comentario(C),Spaces) :- !, writef('%t//%t\n',[Spaces,C]).
pprinter(X,Spaces) :- write(Spaces),write(X).


printprogram :- 
 forall(constraint(Op), 
 (
  writef('void %t() {\n',[Op]),
  forall(testCase(Op,Args,N,B,C,D,E,Ret), 
   (
    (Ret=false -> Code = if(Args,while(B,while(C,if(D,E)))) ;
             Code = if(Args,while(B,while(C,if(D,seq(E,return)))))
	),
          optimize(seq(comentario(['from ',N]),Code),Code2),
	  pprinter(Code2,'   '),write('\n')
           )),
	   writef('\n}\n\n',[])
  )       ).
printprogram.

\end{verbatim}

\textbf{Interpreter (interpreter.pl):}
\begin{verbatim}

chr([],[]).
chr([(C^_)|T],[C|T2]) :- chr(T,T2).

id([],[]).
id([(_^I)|T],[I|T2]) :- id(T,T2).

chr_id(A,B,C) :- chr(A,B),id(A,C).

%
% Builtins Functions
%
%

% the only defined builtins is equality
is_builtin(_=_).


% uses prolog execution to evaluate guards.
check_guard([]).
check_guard([H|T]) :- H, check_guard(T).


dependon(X,X).
dependon(T,X) :- T=..[_|Args], some_depends(Args,X).

some_depends([],_X) :- fail.
some_depends([H|T],X) :- dependon(H,X) ; some_depends(T,X).

wakeup_policy([],_X,[]).
wakeup_policy([H1|T1],X,[H1|T2]) :- dependon(H1,X),!,
                                    wakeup_policy(T1,X,T2).
wakeup_policy([_H1|T1],X,T2) :- wakeup_policy(T1,X,T2).


% naive union_find algorithm to solve builtins.
%
solve_builtin(X=Y,B,B2,UDC,Wokeup) :- at(B,X,X2),!,
                              solve_builtin(X2=Y,B,B2,UDC,Wokeup).
solve_builtin(X=Y,B,B2,UDC,Wokeup) :- at(B,Y,Y2),!,
                              solve_builtin(X=Y2,B,B2,UDC,Wokeup).
solve_builtin(X=Y,B,B2,UDC,Wokeup) :- global_variable(X),!, 
                        B2=[(X=Y)|B], wakeup_policy(UDC,X,Wokeup).
solve_builtin(X=Y,B,B2,UDC,Wokeup) :- global_variable(Y),!, 
                        B2=[(Y=X)|B], wakeup_policy(UDC,Y,Wokeup).
solve_builtin(X=Y,B,B,_UDC,Wokeup) :- X==Y,!, Wokeup = [].
solve_builtin(X=Y,_B,fail,_UDC,[]) :- X\=Y.

%mywrite(X) :- write(traces,X).
mywrite(X) :- write(X).

start_tracing :- open('traces.txt',write,_,[alias(traces)]).
trace(TraceIdx, wake, Constraint, Woken) :- 
               gentrace(TraceIdx,[wake,Constraint,[woken,Woken]]).


gentrace(Idx,X) :- counter(trace_number,Idx),mywrite('['),
         mywrite(Idx),
	 forall(elem(N,X),(mywrite(','),mywrite(N))),write(']\n').

trace(X) :- mywrite(X), mywrite('\n').

stop_tracing :- close(traces).

:- op(100,xfx,#).

:- op(1200,xfx,--->).
:- op(1200,xfx,-*->).

% Operator that calculate one CHR Action
% Syntax:
% State --> [State]
% State = (Goal,UDConstraints,Built_ins,History,Index)
% Based on paper.

%(([C|Tl],UDC,B,H,N) --->(reactivate(C), [(Tl,[C2|UDC],B,H,N)])) :-
%       \+ is_builtin(C),
%       C = _^_,
%       replace(C,B,C2).
isUDC(X) :- (is_builtin(X); C=reactive(_,_); C=_#_),!,false.
isUDC(_).

insert_reference([],_R,[]).
insert_reference([H|T1],R,[reactive(H,R)|T2]) :- 
                                        insert_reference(T1,R,T2).

(([(T1;T2)|A],UDC,B,H,N) ---> 
                   [([T1|A],UDC,B,H,N), ([T2|A],UDC,B,H,N) ]) :-
                                     gentrace(_,[split,T1,T2]).

((_Goal,_UDC,fail,_H,_N) ---> []) :- gentrace(_,[reject]).

(([C|A],UDC,B,H,N) ---> ([(A2,UDC2,B2,H,N)])) :-
       is_builtin(C),
       solve_builtin(C,B,B2,UDC,Wokeup),
       insert_reference(Wokeup,Pos,RefWokeup),
       merge(RefWokeup,A,A2),
       remove(UDC,Wokeup,UDC2),
       gentrace(Pos,[wake,C,[wokeup,Wokeup]]).

(([C|Tl],UDC,B,H,N) ---> ([([(C3#0)|Tl],[C3|UDC],B,H,N2)])) :-
       \+ is_builtin(C), \+ C=reactive(_,_), \+ C=_#_,
       replace(C,B,C2),
       C3 = C2^N,
       N2 is N + 1,
       gentrace(_,[activate,C,N]).

(([reactive(C,Ref)|Tl],UDC,B,H,N) ---> 
                              ([([(C2#0)|Tl],[C2|UDC],B,H,N)])) :-
	\+ is_builtin(C),
	C = _^_,
	replace(C,B,C2),
	gentrace(_,[reactivate,C,Ref]).

(([((C^Id)#Index)|Goal],UDC,B,History,N) ---> 
                                       ([(Goal2,UDC2,B,H2,N)])) :-
	B \= fail,
	C=..[Op|Args1],
	rule(Op,Index,Args2, RName1, K1, R1 ,G1,B1,ActiveInRemove),
	instantiate_locals([Args2,K1,R1,G1,B1],
                      [Args1,K2,R2,G2,B2],Binds),closelist(Binds),
	chr_id(K3,K2,IdKeep),chr_id(R3,R2,IdRem),
	subset(K3,UDC),subset(R3,UDC),	
	merge(IdKeep,IdRem,IdHead), alldifferent(IdHead),
        (IdRem=[] -> \+ elem((RName1,IdHead),History) ; true),
	gentrace(IdTry,
         [tryRule,RName1,C,Id,Index,Binds, (K2 / R2 <=> G2 | B2)]),
	check_guard(G2),
	allocate_unusedvars(Binds),
	(IdRem=[] -> (merge([(RName1,IdHead)],History,H2)) ; 
                                                      H2=History),
	remove(UDC,R3,UDC2), 
	(ActiveInRemove -> merge(B2,Goal, Goal2) ; 
                           merge(B2,[(C^Id)#Index|Goal],Goal2)),!,
	gentrace(_,[applyRule,RName1,IdTry]).

(([((C^Id)#Index)|Goal],UDC,B,H,N) ---> ([(Goal,UDC,B,H,N)])) :-
	C=..[Op|_], \+ rule(Op,Index,_,_,_,_,_,_,_),
	gentrace(_,[drop,(C^Id)#Index]),!.


(([(C^Id)#Index|Goal],UDC,B,H,N) ---> 
                        ([([((C^Id)#Index2)|Goal],UDC,B,H,N)])) :-
	gentrace(_,[default,(C^Id)#Index,Index2]), 
        Index2 is Index + 1.

% Operator -*->
% Executes chains of ---> operator until no more execution is 
% possible

([SH1|ST1] -*-> (S2)) :-
       dbg((SH1 ---> (LS1))),!,
       merge(LS1,ST1,STemp),
       (STemp -*-> (S2)).

([SH1|ST1] -*-> ([SH1|ST2])) :-
	%trace(reject),
	(ST1 -*-> (ST2)).

([] -*-> []).

prettyprinter_actions([]).
prettyprinter_actions([H|T]) :-
prettyprinter_action(H),prettyprinter_actions(T).
prettyprinter_action(apply2(N,B)) :-
       (N @ C), instantiate_locals(C,C2,B),
       writef("apply2 Rule-%t:\n   %t\n   %t\n",[N,C,C2]).
prettyprinter_action(activate(AC)) :- 
                                   writef("activate : %t\n",[AC]).
prettyprinter_action(reactivate(AC^_)) :- 
                                 writef("reactivate : %t\n",[AC]).
prettyprinter_action(solve(B,_)) :- writef("solve : %t\n",[B]).
prettyprinter_action(split(X,Y)) :- 
                                writef("split : (%t,%t)\n",[X,Y]).
prettyprinter_action(fail) :- writef("fail\n").
prettyprinter_action(tryanother) :- write('found solution\n').

printstates([]).
printstates([(_,UDCS,BUILTS,_,_)|T]) :-
       chr(UDCS,Constr),
       writef("Solution:\n     UDCS = %t\n     Buitins= %t\n",
[Constr,BUILTS]), printstates(T).


%
% The predicate run(Goal) executes the Goal and
% printes the executed Actions and Memory.
%

run(Goal) :- ([(Goal,[],[],[],0)] -*-> (Sts)),
             printstates(Sts).

\end{verbatim}
\textbf{Main (chr.pl):}

\begin{verbatim}
:- [util].
:- [compiler].
:- [codegenerator].
:- [interpreter].
\end{verbatim}

\textbf{Test code (test.pl):}
\begin{verbatim}
:- [chr].
r0 @ false <=> 1=0.
reflexivity @ leq(vX,vX) <=> true.
antisimetry @ leq(vX,vY), leq(vY,vX) <=> vX=vY.
idempotency @ leq(vX,vY) \ leq(vX,vY) <=> true. 
transitivity @ leq(vX,vY), leq(vY,vZ) ==> leq(vX,vZ).
exemploLeq1 :- run([leq(v1,v2),leq(v2,v3),leq(v3,v1)]).
r3 @ candidate(vN) <=> vN>1, vM is vN - 1 |  
                                          prime(vN),candidate(vM).
r4 @ candidate(1) <=> true.
r5 @ prime(vX) \ prime(vY) <=> 0 is mod(vY,vX) | true.
exemploPrime1 :- run([candidate(50)]).
r6 @ color(vX) <=> vX=r ; vX=g ; vX=b .
r7 @ edge(vX,vY) ==> color(vY).
r8 @ edge(vX,vC1), edge(vY,vC2), link(vX,vY) ==> vC1=vC2 | false .
r9 @ graph <=> edge(1,vX), edge(2,vY), edge(3,vZ),link(1,2),
                                              link(1,3),link(2,3).
exemploGraph1 :- run([graph]).  
:- compile.
\end{verbatim}
\end{appendix}

\clearpage
\addcontentsline{toc}{section}{R\'ef\'erences}

\bibliographystyle{plain}
\bibliography{LTS,report}

\newpage
\tableofcontents

\end{document}